\documentclass[a4paper]{article}
\usepackage{amsmath,amsfonts,amsthm,amssymb,verbatim,graphicx,color}
\usepackage{nicefrac, xfrac}
\usepackage{enumerate}

\newtheorem{lemma}{Lemma}
\newtheorem{theorem}{Theorem}
\newtheorem{proposition}{Proposition}

\newtheorem{observation}{Observation}
\newtheorem{case}{Case}

\title{Token sliding independent set reconfiguration on graphs with few $P_4$'s}

\author{Luc\'{\i}a Busolini \footnote{Departamento de Matemática, Facultad de Ciencias Exactas y Naturales, Universidad de Buenos Aires (UBA) and Instituto de Cálculo, UBA-CONICET, Buenos Aires, Argentina. e-mail: lucia.busolini@gmail.com} \and Mario Valencia-Pabon\footnote{Universit\'e de Lorraine, LORIA, Nancy, France. e-mail: mario.valencia@loria.fr}}

\date{}

\begin{document}
\maketitle

\begin{abstract}
We consider the INDEPENDENT SET RECONFIGURATION problem under the Token Sliding rule. Let $I$ be an independent set of a simple undirected graph $G$. Suppose that each vertex of $I$ has a token placed on it. The tokens are allowed to be moved, one at a time, by sliding along the edges of $G$, so that after each move, the vertices having tokens always form an independent set of $G$. The problem we deal is to decide if we can transform $I$ into $I'$ through a sequence of steps, each of which involves substituting a vertex in the current independent set with one of its neighbours to obtain another independent set. This problem of determining if one independent set of a graph “is reachable” from another independent set of it is known to be PSPACE-hard even for split graphs, planar graphs, and graphs of bounded treewidth. Polynomial time algorithms have been obtained for certain graph classes like trees, interval graphs, claw-free graphs, bipartite permutation graphs, block graphs, and cographs. We present a polynomial time algorithm for the problem on $P_4$-tidy graphs and $(q,q-4)$-graphs, both families of graphs generalizing cographs. 

\medskip
\noindent {\bf Keywords}: Reconfiguration, Independent Set, Perfect graphs, $P_4$-tidy graphs, $(q,q-4)$-graphs.
\end{abstract} 

\section{Introduction}
A reconfiguration problem is a problem of the following type: we are given an instance of a decision problem, two feasible solutions $S$, $T$, and a local modification rule. The question is whether $S$ can be transformed to $T$ by repeated applications of the modification rule in a way that maintains the solution feasible at all times. Due to their numerous applications, reconfiguration problems have attracted much interest in the literature, and reconfiguration versions of standard problems like satisfiability (see the surveys \cite{Heu13,Ni18} and references therein), the independent set problem \cite{HD05,WR18,Bonsma16,IDHP11}, dominating set problem \cite{FHOU15,BJ21}, vertex cover problem \cite{INZ16,MNR18,IDHP11}, matching problem \cite{BBH19,IDHP11,SS21}, and vertex colouring problem \cite{HD05,CHJ08}, have been widely studied.

Among reconfiguration problems on graphs, Independent Set Reconfiguration is certainly the most well-studied. The complexity of this problem depends heavily on the rule specifying the allowed reconfiguration moves. The main reconfiguration rules that have been studied for Independent Set Reconfiguration are Token Addition \& Removal (TAR) \cite{KMM12, MNR17}, Token Jumping (TJ) \cite{BKW14, Bonsma16, BMP17, IDHP11, IKO14, IKO14-2}, and Token Sliding (TS) \cite{BoBo17, DDF15, FHOU15, HD05, LM18, BKL21, FP25}. In all rules, we are required to keep the current set independent at all times. TAR allows us to add or remove any vertex in the current set, as long as the set’s size is always higher than a predetermined threshold. TJ allows to exchange any vertex in the set with any vertex outside it (thus keeping the size of the set constant at all times). Finally, under TS, we are allowed to exchange a vertex in the current independent set with one of its neighbors, that is, we are allowed to perform a TS move only if the two involved vertices are adjacent.

The Independent Set Reconfiguration problem has been intensively studied under all three rules. Because the problem is PSPACE-complete in general for all three rules \cite{KMM12}, this has motivated the study of its complexity in restricted classes of graphs, with an emphasis on graphs where Independent Set is polynomial-time solvable, such as chordal graphs, bipartite graphs and cographs.

In this paper, we focus in the Independent Set Reconfiguration problem under the TS rule which we call {\em TS-ISR}. The problem TS-ISR was first observed to be PSPACE-complete for general graphs by Hearn and Demaine \cite{HD05}. In fact, their result implies that the problem is PSPACE-complete even for subcubic planar graphs (see \cite{KMM12}). Later, the problem was shown to be PSPACE-complete for perfect graphs by Kamiński, Medvedev and Milanič \cite{KMM12}, and this was further improved by Lokshtanov and Mouawad \cite{LM18}, who showed that the problem remains PSPACE-hard even for bipartite graphs. It was shown by Belmonte et al. \cite{BKL21} that the problem is PSPACE-hard even in the case of split graphs. Note that split graphs form a subclass of chordal graphs and even hole-free graphs, and hence the problem is PSPACE-complete for chordal graphs and even-hole free graphs as well. Wrochna \cite{WR18} showed that the problem is PSPACE-complete when restricted to graphs of bounded bandwidth, which implies that the problem is PSPACE-complete for graphs of bounded treewidth, or in fact bounded pathwidth.

Demaine et al. \cite{DDF15} showed that TS-ISR is polynomial time solvable for trees and such a result has been generalized recently to block graphs by Francis and Prabhakaran \cite{FP25}. Polynomial time algorithms for the problem were obtained for claw-free graphs by Bonsma, Kamiński and Wrochna \cite{BKW14}, for cographs by Kamiński, Medvedev and Milanič \cite{KMM12}, for interval graphs by Bonamy and Bousquet \cite{BoBo17}, and for bipartite permutation graphs and bipartite distance-hereditary graphs by Fox-Epstein et al. \cite{FHOU15}.

The main result of this paper is to to show that TS-ISR is polynomial time solvable for $P_4$-tidy graphs and for $(q,q-4)$-graphs, two subclasses of perfect graphs that generalizes cographs.


\section{Definitions and preliminary results}
We use standard graph-theoretic terminology. Let $G = (V,E)$ be a graph. We will denote by $V(G)$ the vertex set $V$, by $E(G)$ the edge set $E$, and by $\overline{G}$ the complement graph of $G$. Given a subset of vertices $X \subset V$, we will denote by $G[X]$ the subgraph of $G$ induced by $X$. The complete graph on $n$ vertices will be denoted by $K_n$ and the stable set of $n$ vertices by $S_n$. The {\em neighborhood} of a vertex $v$ is denoted by $N_G(v)$, that is the set of vertices that are adjacent to $v$, and $N_G[v]$ denotes the \emph{closed neighborhood} of $v$, which is $N_G(v)\cup \{v\}$. Two vertices will be said to be {\em true twins} if they are adjacent and have the same neighborhood, and {\em false twins} if they are non-adjacent but have the same neighbors. 

An induced path on $k$ vertices is denoted by $P_k$. Vertices of degree one (resp. two) in $P_k$ will be called endpoints (resp. midpoints). An induced subgraph of $G$ isomorphic to $P_k$ is simply said to be a $P_k$ in $G$. A chordless cycle on $k$ vertices is denoted by $C_k$.

A {\em cograph} is a graph that does not contain $P_4$ as an induced subgraph \cite{CLS81}. Several generalizations of cographs have been defined in the literature, such as {\em $P_4$-sparse} \cite{Hoang85}, {\em $P_4$-lite} \cite{JO89-1}, {\em $P_4$-extendible} \cite{JO91} and {\em $P_4$-reducible} graphs \cite{JO89-2}. A graph class generalizing all of them is the class of {\em $P_4$-tidy} graphs \cite{GRT97}. Let $G$ be a graph and $A$ a $P_4$ in $G$. A {\em partner} of $A$ is a vertex $v$ in $G \setminus A$ such that $A \cup \{v\}$ induces at least two $P_4$s in $G$. A graph $G$ is $P_4$-sparse if no induced $P_4$ has a partner and $P_4$-tidy if every induced $P_4$ has at most one partner. Another generalization of $P_4$-sparse graphs are $(q, q-4)$-graphs. A graph is a $(q, q-4)$-graph if no set of at most $q$ vertices induces more than $q - 4$ distinct $P_4$’s \cite{BO98}. In this sense, the cographs are precisely the $(4,0)$-graphs and $P_4$-sparse graphs coincide with the $(5,1)$-graphs. However, there is no containment relationship between the classes $P_4$-tidy and $(q, q-4)$-graphs.

Let $G_1 = (V_1, E_1)$ and $G_2 = (V_2, E_2)$ be two graphs such that $V_1 \cap V_2 = \emptyset$. The {\em union} of $G_1$ and $G_2$ is the graph $G_1 \cup G_2 = (V_1 \cup V_2, E_1 \cup E_2)$. The {\em join} of $G_1$ and $G_2$ is the graph $G_1 \vee G_2 = (V_1 \cup V_2, E_1 \cup E_2 \cup V_1 \times V_2)$. That is, the vertex set of $G_1 \vee G_2$ is $V_1 \cup V_2$ and its edge set is $E_1 \cup E_2$ plus all the possible edges with an endpoint in $V_1$ and the other one in $V_2$. 

Cographs can be built from isolated vertices by using these two operations.

\begin{theorem}(\cite{CLS81}) 
Every non-trivial cograph is either a union or join of two smaller cographs.
\end{theorem}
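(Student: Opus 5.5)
The key step is to show that for every $P_4$-free graph $G$ on at least two vertices, $G$ or $\overline{G}$ is disconnected. Once that is known, a disconnected $G$ splits as $G = G_1 \cup G_2$, where $G_1$ is one connected component and $G_2$ is the rest. If instead $\overline{G}$ is disconnected with components split into $A$ and $B$, then every vertex of $A$ is adjacent in $G$ to every vertex of $B$, so $G = G[A] \vee G[B]$. In both cases the two parts are induced subgraphs of $G$, so they are $P_4$-free, and they are strictly smaller because both are nonempty.

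To prove the key step I would argue by contradiction, using induction on $|V(G)|$. Note first that $P_4$ is self-complementary, so $\overline{G}$ is $P_4$-free as well. Suppose $G$ and $\overline{G}$ are both connected and $|V(G)|\ge 2$. Choose a vertex $v$ and consider $G' = G - v$.

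\emph{Case $|V(G')| = 1$.} Then $G$ has two vertices, and one of $G$, $\overline{G}$ is disconnected, a contradiction.

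\emph{Case $|V(G')| \ge 2$.} By induction, $G'$ or $\overline{G'}$ is disconnected. Since the hypotheses are symmetric under complementation, we may assume $G'$ is disconnected, with components $C_1,\dots,C_k$ where $k\ge 2$.
\begin{enumerate}[(i)]
\item Because $G$ is connected, $v$ has a neighbour in every $C_i$.
\item Because $\overline{G}$ is connected, $v$ is not adjacent to all of $G'$. So there is a component, say $C_1$, and a vertex $x\in C_1$ with $x \notin N(v)$.
\item Since $C_1$ is connected and contains both neighbours and non-neighbours of $v$, there is an edge $xy$ inside $C_1$ with $y \in N(v)$ and $x\notin N(v)$.
\item Take $z \in C_2 \cap N(v)$, which exists by (i).
\end{enumerate}
Then $x\,y\,v\,z$ is an induced $P_4$. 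Indeed, $xy$, $yv$ and $vz$ are edges; $x$ and $y$ are not adjacent to $z$ because they lie in a different component of $G'$; and $x$ is not adjacent to $v$ by choice. This contradicts $P_4$-freeness.

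The main point needing care is step (iii): finding an adjacent pair in $C_1$ that is split by $N(v)$. This uses only the connectivity of $C_1$, by walking along a path from $x$ to any neighbour of $v$ inside $C_1$. The complementation step is legitimate because both the $P_4$-free property and the pair of connectivity assumptions are invariant when $G$ is replaced by $\overline{G}$.
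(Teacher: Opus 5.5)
The paper does not prove this statement; it only cites \cite{CLS81}, so there is no internal argument to compare with. Your proof is correct and complete. It is the standard inductive proof of Seinsche's lemma: a $P_4$-free graph $G$ on at least two vertices has $G$ or $\overline{G}$ disconnected. Every step checks out:
\begin{itemize}
\item Step (i) uses that $G$ is connected and that $G-v$ has at least two components.
\item Step (ii) uses that $v$ would otherwise be isolated in $\overline{G}$, which has at least two vertices.
\item The complementation reduction is legitimate, because $P_4$ is self-complementary and $\overline{G}-v=\overline{G-v}$.
\item The path $x\,y\,v\,z$ is indeed induced.
\end{itemize}
The paper defines cographs as $P_4$-free graphs. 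Under that definition, this lemma is the entire content of the theorem. Your passage from it to a union (one component against the rest) or a join (a grouping of the components of $\overline{G}$) into nonempty, strictly smaller, $P_4$-free induced subgraphs is right.

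Compared with the bare citation, your route buys two things. It is self-contained, and it is constructive: the two parts are read off from the connected components of $G$ or of $\overline{G}$, which is how such a decomposition is computed in polynomial time. The cited source, by contrast, obtains the result inside a larger theory (several equivalent characterizations of cographs and the uniqueness of the cotree), most of which is not needed here.

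The only blemish is cosmetic: you use the name $x$ in (ii) and again in (iii) for possibly different vertices, so rename one of them.
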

    
$P_4$-tidy graphs have also a useful decomposition theorem. First, we must to introduce some concepts as follows.

Let $G = (V , E)$ be a graph. Let $F = \{e \in E : e \text{ belongs to an induced $P_4$ of $G$}\}$. Let $G_p =(V ,F)$. A connected component of $G_p$ having exactly one vertex is called a {\em weak vertex}. Any connected component of $G_p$ distinct from a weak vertex is called a {\em $p$-component} of $G$. A graph $G$ is {\em $p$-connected} if it has only one $p$-component and no weak vertices \cite{BO99}.

A $p$-connected graph $G = (V , E)$ is {\em $p$-separable} if $V$ can be partitioned into two sets $(C , S)$ such that each $P_4$ that contains vertices from $C$ and from $S$ has its midpoints in $C$ and its endpoints in $S$. We will call it a {\em $p$-partition}. If such a partition exists, then it is unique \cite{JO95}.

An {\em urchin} (resp. {\em starfish}) of size $k$, $k \geq 2$, is a $p$-separable graph with $p$-partition $(C , S)$, where $C = \{c_1,\ldots, c_k\}$ is a clique; $S = \{s_1,\ldots, s_k\}$ is a stable set; $s_i$ is adjacent to $c_i$ if and only if $i = j$ (resp. $i \neq j$).

A {\em quasi-urchin} (resp. {\em quasi-starfish}) of size $k$ is a graph obtained from an urchin (resp. starfish) of size $k$ by replacing at most one vertex by $K_2$ or $S_2$. Note that the new vertices result in true or false twins, respectively, and they are in the same set of the new $p$-partition $(C^*, S^*)$. The elements of $S^*$ are called the {\em legs} and $C^*$ is called the {\em body} of the quasi-starfish or
quasi-urchin.

Note that there are five possible quasi-starfishes of size two, and they are also the five possible quasi-urchins of size two: $P_4$, $P$, $\overline{P}$, fork and kite (see Fig. \ref{flavia-fig}). To avoid ambiguity, we will consider these five graphs as quasi-starfishes, while quasi-urchins will be always of size at least three.

\begin{figure}[htp]
 \centering
 \includegraphics[scale=.5]{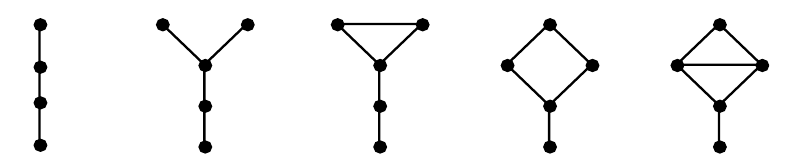}
  \caption{Possible quasi-starfishes of size two. From left to right: $P_4$, fork, $\overline{P}$, $P$ and kite.}
 \label{flavia-fig}
\end{figure} 

When considering quasi-urchins and quasi-starfishes, we have ten kinds of them. We will call {\em type $1$} (resp. {\em type $2$}) the urchins (resp. starfishes); {\em type $3$} (resp. {\em type $4$}) the urchins (resp. starfishes), where a vertex in the body was replaced by $K_2$; {\em type $5$} (resp. {\em type $6$}) the urchins (resp. starfishes), where a vertex in the body was replaced by $S_2$; {\em type $7$} (resp. {\em type $8$}) the urchins (resp. starfishes), where a leg was replaced by $K_2$; and {\em type $9$} (resp. {\em type $10$}) the urchins (resp. starfishes), where a leg was replaced by $S_2$. Recall that graphs of odd type have always size at least three and, with this condition, the ten types form a partition over the family of quasi-urchins and quasi-starfishes.

Let $G_1 = (V_1, E_1)$ and $G_2 = (V_2, E_2)$ be two graphs with $V_1 \cap V_2 = \emptyset$, such that $G_1$ is $p$-separable with partition $(V^1_1 , V^2_1)$. Consider the graph with vertex set $V_1 \cup V_2$ and edge set $E_1 \cup E_2 \cup \{xy : x \in V^1_1 , y \in V_2\}$. We shall denote this graph by $G_1 \veebar G_2$. 

\begin{theorem}(\cite{JO95}). 
\label{th-p}
Every graph $G$ either is $p$-connected or can be obtained uniquely from its $p$-components and weak vertices by a finite sequence of $\cup$, $\vee$ and $\veebar$ operations. Moreover, such a decomposition can be computed in polynomial time on the size of $G$.
\end{theorem}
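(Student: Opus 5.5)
The plan is to prove the decomposition by strong induction on $|V(G)|$, working with the $P_4$-structure graph $G_p=(V,F)$ directly. Suppose $G$ is not $p$-connected, so $G_p$ has at least two components, counting weak vertices. We aim to show that $V$ splits into two nonempty parts $V_1,V_2$, each a union of $p$-components and weak vertices, such that one of the following holds:
\begin{enumerate}[(i)]
\item there are no edges between $V_1$ and $V_2$, so $G=G[V_1]\cup G[V_2]$;
\item all edges between them are present, so $G=G[V_1]\vee G[V_2]$;
\item $G[V_1]$ is a single $p$-separable $p$-component with $p$-partition $(C,S)$, and $V_2$ is complete to $C$ and anticomplete to $S$, so $G=G[V_1]\veebar G[V_2]$.
\end{enumerate}
Each induced $P_4$ lies inside a single $p$-component, because all its three edges belong to $F$. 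It follows that the $p$-components and weak vertices of $G[V_1]$ and $G[V_2]$ are exactly those of $G$ restricted to each side. The induction hypothesis then applies to both sides.

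The first step is a cograph-like dichotomy. If $G$ is disconnected, take $V_1$ to be a connected component; every $P_4$ is connected, so the components of $G_p$ refine the components of $G$, and we are in case (i). If $\overline G$ is disconnected, argue the same way in the complement: $P_4$ is self-complementary, and $F$ for $\overline G$ corresponds to $F$ for $G$. This gives case (ii).

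The remaining case, where both $G$ and $\overline G$ are connected, is the heart of the argument. Here we fix a $p$-component $H$ and study how every vertex $x\notin H$ attaches to it. Take a $P_4$ $a b c d$ inside $H$, and a vertex $x$ outside $H$ that sees some but not all of it. A case check of the adjacency pattern shows that $x$ together with three of $a,b,c,d$ induces a new $P_4$ using an edge from $x$, which would put $x$ in $H$. The only exception is when $x$ is adjacent exactly to the midpoints $b,c$. Propagating this along overlapping $P_4$'s in $H$, we get that for each $x\notin H$ the set $N(x)\cap H$ is one of $\emptyset$, $H$, or a fixed set $C\subseteq H$.

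We then show that $C$ is the same set for all such $x$, and that $(C, H\setminus C)$ is a $p$-partition of $H$. This is the step I expect to be the main obstacle. It requires showing that every $P_4$ of $H$ has its midpoints in $C$ whenever it mixes $C$ and $S$, and that $C$ is independent of the choice of $x$. The argument uses connectivity of $H$ in $G_p$ and the partition uniqueness result from \cite{JO95}. With this in hand, we let $V_2$ be the vertices outside $H$. If some $x$ sees exactly $C$, connectivity of $G$ and $\overline G$ forces all of $V_2$ to attach in the same way: a mixed pattern would create a $P_4$ through an outside vertex crossing $H$. This gives case (iii). If no such $x$ exists, every outside vertex sees all of $H$ or none of it, and connectivity of $G$ and $\overline G$ gives a contradiction unless $G=H$.

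Uniqueness would follow from the fact that the operations are determined canonically: the components of $G$, the components of $\overline G$, or the unique $p$-partition. Polynomial time follows because $F$ can be computed in $O(n^4)$ by enumerating quadruples, and each decomposition step is computable from components and neighbourhoods.
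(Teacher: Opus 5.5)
The paper does not prove this statement; it imports it from \cite{JO95}. Your outline has the right overall shape, and your case check for a vertex partial on a $P_4$ is correct. The main case, however, has a genuine gap: you fix an \emph{arbitrary} $p$-component $H$, and both branches of your final dichotomy fail for a wrong choice. Let the bull be the path $w\,x\,y\,t$ plus a vertex $h$ adjacent only to $x$ and $y$; its only $P_4$ is $wxyt$, and $h$ lies in none. Substitute a graph $K$ for $h$. Since $V(K)$ is a module, the $P_4$'s of the resulting $G$ are $wxyt$ and those of $K$. Both $G$ and $\overline{G}$ are connected, and the correct split is $G=G[\{w,x,y,t\}]\veebar K$ with $C=\{x,y\}$. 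If $K=P_4$ and you fix $H=V(K)$, no outside vertex is partial on $H$, yet $G\neq H$, so there is no contradiction. If $K$ is itself a bull on $a\,b\,c\,d$ plus $z$ adjacent to $b,c$, and you fix $H=\{a,b,c,d\}$, then $z$ sees exactly $C=\{b,c\}$, while $x$ sees all of $H$ and $w$ none of it, and no $P_4$ crosses $H$. So your claim that a mixed attachment pattern creates a crossing $P_4$ is false.

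Write $Z$, $Y$, $X$ for the outside vertices that are partial on, complete to, and anticomplete to $H$. What your checks do give is that $Y$ is complete to $Z$ and $X$ is anticomplete to $Z$: otherwise $z\,c\,y\,a$ or $x\,z\,b\,a$ is a crossing $P_4$, for a $P_4$ $abcd$ of $H$ with $b,c\in C$ and $a,d\in S$. Hence $H\cup Z$ is a module, and connectivity of $G$ and $\overline{G}$ cannot stop it from being proper. The missing idea is how to select $H$: it must not lie in any proper module. For example, when $G$ and $\overline{G}$ are connected, Gallai's theorem says the maximal proper modules partition $V$ with a prime quotient. Lift a $P_4$ of that quotient and take its $p$-component as $H$; then $H\cup Z=V$ and $Z\neq\emptyset$. (Alternatively, choose $H$ maximizing $|H\cup Z|$ and apply induction to the quotient by $H\cup Z$.) Uniqueness then also requires that this $H$ is unique, not only its $p$-partition. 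This is short: if $H\neq H''$ both qualified, a vertex of $S_H$ would see some $u\in C_{H''}$, whereas $u$ sees only $C_H$ inside $H$. Conversely, the step you flagged as the main obstacle is immediate. For partial $x$, your case check says exactly that $(N(x)\cap H,\,H\setminus N(x))$ meets the definition of a $p$-partition, and uniqueness of $p$-partitions makes $C$ independent of $x$; no propagation is needed.
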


\begin{proposition}(\cite{GRT97}).
\label{pro$p$-tidy} 
A graph $G$ is $P_4$-tidy if and only if every $p$-component is isomorphic to either $P_5$ or $\overline{P_5}$ or $C_5$ or a quasi-starfish or a quasi-urchin. Quasi-starfishes and quasi-urchins are the $p$-separable $p$-components of $G$.
\end{proposition}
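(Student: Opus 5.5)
The plan is to treat the two directions separately, using one reduction that shows $P_4$'s and partners never cross $p$-components.

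\emph{Reduction to $p$-components.} Let $A$ be an induced $P_4$ of $G$. All three edges of $A$ lie in $F$, and they form a connected subgraph of $G_p$. Hence $A$ is contained in a single $p$-component $H$. Now let $v$ be a partner of $A$. The set $A\cup\{v\}$ induces at least two $P_4$'s, so some $P_4$ other than $A$ contains $v$. On five vertices, two distinct $P_4$'s share three vertices, and I claim they share at least one edge. (Sharing three vertices but no edge would force the shared vertices to be pairwise non-adjacent, which is impossible inside a $P_4$ on four of those five vertices except in trivial configurations that I will check by hand.) That second $P_4$ therefore lies in $H$ too, so $v\in V(H)$. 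Since $H$ is an induced subgraph of $G$, the $P_4$'s of $H$ and their partners are exactly those of $G$ restricted to $H$. Consequently $G$ is $P_4$-tidy if and only if every $p$-component of $G$ is $P_4$-tidy. Weak vertices lie in no $P_4$, so they play no role.

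\emph{Sufficiency.} By the reduction, it is enough to check that $P_5$, $\overline{P_5}$, $C_5$, every quasi-starfish and every quasi-urchin is $P_4$-tidy.
\begin{enumerate}[(i)]
\item The three graphs on five vertices are immediate.
\item For an urchin or starfish with $p$-partition $(C,S)$, every $P_4$ has the form $s_i c_i c_j s_j$ (urchin) or $s_i c_j c_i s_j$ (starfish). Adding any further vertex creates no second $P_4$, because an extra leg or body vertex is adjacent to both or neither of the relevant body vertices. So such a $P_4$ has no partner.
\item When one vertex is replaced by $K_2$ or $S_2$, the twin $x'$ of a vertex $x\in A$ is the unique vertex with $A-x+x'$ again a $P_4$. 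So every $P_4$ has at most one partner.
\end{enumerate}
For the final sentence of the proposition, quasi-starfishes and quasi-urchins are $p$-separable by definition. I will show directly that $P_5$, $\overline{P_5}$ and $C_5$ admit no partition $(C,S)$ of the required kind: in $C_5$, for example, every vertex is both a midpoint and an endpoint of some $P_4$.

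\emph{Necessity, the main obstacle.} Let $H$ be a $p$-component of a $P_4$-tidy graph; $H$ is $p$-connected and $P_4$-tidy. If $|V(H)|\le 5$, I will enumerate: $H$ is a $P_4$ plus at most one vertex, and the possibilities are the five quasi-starfishes of size two together with $P_5$, $\overline{P_5}$ and $C_5$.

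For larger $H$, I plan to first show that $H$ is $p$-separable. The idea is to use that, since $P_5$, $\overline{P_5}$ and $C_5$ are self-contained, any sixth vertex attached through a $P_4$ to one of them creates a $P_4$ with two partners. This requires choosing a $p$-connected induced subgraph on $|V(H)|-1$ vertices. I will try to prove that a $p$-connected graph always has a vertex whose deletion keeps it $p$-connected, or at least keeps a large $p$-connected piece.

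Given the $p$-partition $(C,S)$, I will then fix a $P_4$ $s\,c\,c'\,s'$. I will use the partner bound to show that each further vertex of $S$ "matches" a unique vertex of $C$, either adjacent only to it (urchin) or non-adjacent only to it (starfish). The key point is that any deviation produces a $P_4$ with two partners. The same argument should show that at most one twin replacement can occur: two replaced vertices would give a $P_4$ with two partners, namely the two twins.

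Controlling this case analysis uniformly, especially establishing $p$-separability for $|V(H)|\ge 6$, is the step I expect to be hardest. If the deletion argument fails, I would fall back on the Jamison--Olariu structure theory of $p$-connected graphs from \cite{JO95}.
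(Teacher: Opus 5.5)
The paper does not prove this proposition; it quotes it from \cite{GRT97}, so your argument has to stand on its own. Your reduction to $p$-components, the sufficiency direction and the enumeration for $|V(H)|\le 5$ are fine. One small fix in (ii): what excludes a partner is that an extra vertex sees none of $A$, exactly its two midpoints, or all of $A$. ``Both or neither body vertex'' alone does not suffice, since a vertex seeing only one endpoint is a partner.

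\textbf{The gap is the necessity direction for $|V(H)|\ge 6$.} This is the real content of the theorem, and you only describe it as a plan. The one concrete lemma you propose is false, and false exactly on the target graphs. In an urchin or starfish of size $k\ge 3$, every $P_4$ through $c_i$ contains $s_i$ and vice versa (the $P_4$'s are $s_ic_ic_js_j$, resp.\ $s_jc_ic_js_i$). So deleting any single vertex creates a weak vertex, and there is no $p$-connected induced subgraph on $|V(H)|-1$ vertices to induct from. Moreover, excluding $P_5$, $\overline{P_5}$, $C_5$ does not by itself give $p$-separability. Take the $4$-cycle $p\,x\,q\,s$ with a pendant vertex $r$ at $q$ and a pendant vertex $t$ at $s$. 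It is $p$-connected, contains none of the three graphs, and is not $p$-separable: any $p$-partition would have to split each of its $P_4$'s, yet $x$ is a midpoint of $r\,q\,x\,p$ and an endpoint of $x\,p\,s\,t$. So tidiness must be used again in an essential way. Your ``matching'' and ``at most one twin'' steps are likewise only asserted.

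\textbf{A route that does work.} (1) Suppose $M$ induces $P_5$, $\overline{P_5}$ or $C_5$. Each $P_4$ inside $M$ already has its partner in $M$. Hence every outside vertex sees none, exactly the midpoints, or all of each such $P_4$ (these are precisely the non-partner patterns). Comparing two overlapping ones forces it to see all or none of $M$, so $M$ is a module. A $P_4$ meets a module in $0$ or $1$ vertices or lies inside it. One meeting $M$ in a single vertex would have the other four vertices of $M$ as partners. So $M$ is a whole $p$-component.

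(2) Hence, for $|V(H)|\ge 6$, every partner $v$ of a $P_4$ $A$ makes $A\cup\{v\}$ one of $P$, $\overline{P}$, fork or kite, in which $v$ is a twin of some $x\in A$. Both $P_4$'s of $A\cup\{v\}$ already have a partner. The same pattern argument then shows every other vertex treats $v$ and $x$ alike, so $v$ and $x$ are twins in $H$.

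Now delete vertices that have a twin, one at a time. This preserves tidiness and $p$-connectedness, because no $P_4$ contains a twin pair, so each $P_4$ through the deleted vertex reappears through its twin. The process ends in a twin-free, hence partner-free, $p$-connected graph, i.e.\ an urchin or starfish by Theorem~\ref{th-q}(a). Reinstate the deleted vertices in reverse order. A second reinstated vertex would give some $P_4$ two partners, because any two vertices of an urchin or starfish lie on a common $P_4$; a triple of mutual twins is excluded the same way. So $H$ is a quasi-urchin or quasi-starfish. Alternatively, rely on \cite{JO95,GRT97}, as the paper does.
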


The following result characterize the $p$-connected $(q,q-4)$-graphs. 
\begin{theorem}(\cite{BO98}).
\label{th-q}
Let $G = (V, E)$ be $p$-connected.
\begin{itemize}
\item[(a)] If $G$ is a $(5,1)$-graph then $G$ is an urchin or a starfish.
\item[(b)] If $G$ is a $(7,3)$-graph then $|V| < 7$ or $G$ is an urchin or a starfish.
\item[(c)] If G is a $(q,q-4)$-graph, $q = 6$ or $q \geq 8$, then $|V| < q$.
\end{itemize}
\end{theorem}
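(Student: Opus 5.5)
The plan is to work entirely with the combinatorics of the $P_4$'s inside a $p$-connected graph. The basic tool will be a local analysis: for a $P_4$ $A=abcd$ and a vertex $v\notin A$, enumerate the possible adjacencies of $v$ to $A$ and record how many $P_4$'s the set $A\cup\{v\}$ induces. One checks that $A\cup\{v\}$ induces exactly one $P_4$ only in the following cases:
\begin{itemize}
\item[(i)] $v$ is adjacent to nothing in $A$;
\item[(ii)] $v$ is adjacent to all of $A$;
\item[(iii)] $v$ is adjacent to exactly the midpoints $b,c$.
\end{itemize}
In every other case $A\cup\{v\}$ induces at least two $P_4$'s; for instance, if $v$ is adjacent to exactly $\{a,b,c\}$, then $vcd$ extends to a second $P_4$.

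\textbf{Part (a).} Every $5$-set induces at most one $P_4$, so every outside vertex falls into (i)--(iii) with respect to every $P_4$. I would prove that a $p$-connected graph with this property is an urchin or a starfish, arguing by induction on $|V|$ along a chain of $P_4$'s.
\begin{itemize}
\item By $p$-connectivity, the $P_4$'s can be ordered so that each one shares an edge of $G_p$ with the union of the previous ones.
\item Start from one $P_4$. It is both an urchin and a starfish of size $2$, with $C=\{b,c\}$ and $S=\{a,d\}$.
\item Whenever a new $P_4$ brings in a vertex $v$, use the trichotomy (i)--(iii) against every already-built $P_4$ to force $v$ either into $S$, as a new leg, or into $C$, as a new body vertex.
\item Legs and body vertices must be added in pairs. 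Otherwise some $P_4$ $s_ic_ic_js_j$ together with the unpaired vertex produces a second $P_4$ on five vertices.
\item The adjacency pattern of a new pair, namely matching versus anti-matching, must agree with the pattern already present. Otherwise a five-vertex set with two $P_4$'s appears.
\end{itemize}
Finally, every vertex of a $p$-connected graph lies on some $P_4$, so the process absorbs all of $V$.

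\textbf{Parts (b) and (c).} The key quantitative step is to count $P_4$'s in small $p$-connected graphs. An urchin or starfish of size $k$ has $2k$ vertices and $\binom{k}{2}$ or more $P_4$'s. For $k=3$ this already gives $3>6-4$ on $6$ vertices. This is why such graphs are excluded for $q=6$ but permitted for $q=7$, where the forbidden threshold is $>3$ $P_4$'s on at most $7$ vertices.

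The plan is to prove a reduction lemma:
\begin{quote}
If $G$ is $p$-connected with $|V|\ge q$, then $G$ contains an induced subgraph on at most $q$ vertices with more than $q-4$ $P_4$'s, except when $q=7$ and $G$ is a spider.
\end{quote}
To prove it, I would grow a $p$-connected induced subgraph $H$ from a single $P_4$, adding one or two vertices at a time through a $P_4$ sharing an edge with $H$. I would then argue that each added vertex creates at least one new $P_4$, and usually more. Vertices that create exactly one new $P_4$ are precisely those behaving as in (i)--(iii), which forces spider structure as in part (a). Outside the spider case, a vertex of type (iii) relative to a $P_4$ $A$ creates at least one new $P_4$ together with $A$. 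Any non-(i)--(iii) vertex creates at least two new $P_4$'s, giving the surplus needed to exceed $|H|-4$. Stopping when $|H|$ reaches $q$ (or $q-1$, if the last step adds a pair) yields the forbidden configuration, with the boundary values $q=6$ and $q=7$ checked by hand.

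\textbf{Main obstacle.} I expect the reduction lemma in (b)/(c) to be the hardest part. The difficulty is guaranteeing that the growth can stop at exactly the right size while still having the required surplus of $P_4$'s. In particular, one must handle steps where two vertices must be added simultaneously to preserve $p$-connectivity, and show that such a pair contributes at least two new $P_4$'s unless the graph is a spider. To control these pair additions, I would first try a careful case analysis of how a new $P_4$ meets $H$ in exactly one edge.
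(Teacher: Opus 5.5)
The paper gives no proof of this statement; it quotes it from Babel and Olariu \cite{BO98}, so your argument has to stand on its own. Your trichotomy (i)--(iii) is correct, and the outline for (a) can be completed with it. The problems are in (b) and (c).

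First, several counting claims contradict your own trichotomy. If $v$ is not of type (i)--(iii) with respect to $A$, then the two $P_4$'s in $A\cup\{v\}$ include $A$ itself, so $v$ is only guaranteed \emph{one} new $P_4$, not two. A vertex of type (iii) creates \emph{no} new $P_4$ with $A$. It is also false that the vertices creating exactly one new $P_4$ are those of type (i)--(iii): a vertex adjacent only to $a$ turns $abcd$ into $P_5$, creating exactly one new $P_4$, yet it has none of the three types. What single-vertex steps really give is one new $P_4$ per step, since the new vertex must lie on a $P_4$ of the $p$-connected graph $H+v$. That is exactly the invariant $\#P_4(H)\ge |H|-3$. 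Pair steps need a separate argument to preserve this invariant, and you do not supply one.

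Second, and this is the real gap, your stopping rule fails. At $|H|=q-1$ the invariant gives only $q-4$ $P_4$'s, which is allowed, and nothing in the plan produces the surplus you need. Concretely, take $q=6$ and let $G$ be the urchin of size $3$ with $c_1$ replaced by true twins $c_1,c_1'$.
\begin{itemize}
\item Grow from $s_1c_1c_2s_2$, adding single vertices whenever possible. The only single vertex that extends is $c_1'$, giving five vertices with exactly two $P_4$'s.
\item Now $s_3$ has no neighbour in $H$.
\item $c_3$ lies on no $P_4$ of $H+c_3$. Its neighbourhood there is the clique $\{c_1,c_1',c_2\}$, and every neighbour of the legs $s_1,s_2$ lies in that clique.
\item So only the pair $\{s_3,c_3\}$ extends $H$, and you jump to seven vertices.
\end{itemize}
Along this chain you never meet a set of at most six vertices with more than two $P_4$'s, although $\{s_1,c_1,s_2,c_2,s_3,c_3\}$ has three.

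Here $H$ is not a spider and the pair adds three new $P_4$'s. So your proposed remedy (pairs add at least two $P_4$'s unless we are in a spider) does not touch the problem. The obstruction is overshooting the size $q$, not the number of $P_4$'s.

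What is missing is a size-control argument. You must show that the growth can be arranged, or a different subgraph chosen, so that one of two things happens:
\begin{itemize}
\item a $p$-connected induced subgraph on exactly $q$ vertices is reached; or
\item a forced jump past $q$ occurs only at a spider on $q-1$ vertices, which has $\binom{(q-1)/2}{2}>q-4$ $P_4$'s when $q\ge 9$ is odd.
\end{itemize}
This is combined with a proof that pair steps keep $\#P_4(H)\ge|H|-3$. This is precisely the substance of the cited result, and your proposal leaves it open.
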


Let $G$ be a graph and $A, B$ independent sets of $G$.
If there exists an edge $vw$ in $G$ such that $A \setminus B = \{v\}$ and $B \setminus A = \{w\}$, we say that $B$ can be obtained from $A$ by \emph{sliding} the token on $v \in A$ to $w$ along the edge $vw$.
If there is a sequence $I_1, I_2, \dots, I_l$ of independent sets of $G$ such that $A=I_1$, $B=I_l$ and, for every $i \in \{1,2, \dots, l-1\}$, $I_{i+1}$ can be obtained from $I_i$ by sliding a token, we say that $B$ is \emph{reachable} from $A$ and we denote it by $A \overset{G}{\leftrightsquigarrow} B$.

\begin{observation}\label{obs:SameSize}
    Let $G$ be a graph and $A, B$ independent sets of $G$. If $A \overset{G}{\leftrightsquigarrow} B$, $|A|=|B|$.
\end{observation}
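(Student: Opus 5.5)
The claim follows by induction on the length $l$ of a sequence $I_1,\dots,I_l$ witnessing $A \overset{G}{\leftrightsquigarrow} B$, and the only ingredient is that a single slide preserves cardinality. So the plan is to check one slide first and then chain the equalities.

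For one slide, suppose $I_{i+1}$ is obtained from $I_i$ along the edge $vw$. By definition $I_i \setminus I_{i+1} = \{v\}$ and $I_{i+1} \setminus I_i = \{w\}$. Write each set as a disjoint union of its common part and its private part:
\[
I_i = (I_i \cap I_{i+1}) \cup \{v\}, \qquad I_{i+1} = (I_i \cap I_{i+1}) \cup \{w\}.
\]
Both unions are disjoint, so $|I_i| = |I_i\cap I_{i+1}| + 1 = |I_{i+1}|$.

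Chaining these equalities for $i=1,\dots,l-1$ gives $|A| = |I_1| = |I_2| = \dots = |I_l| = |B|$. The case $l=1$ is trivial, since then $A=B$.

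There is no real obstacle here. The only point to be careful about is that the definition of sliding specifies the two set differences exactly, rather than just asserting that some token moved, and this is precisely what forces the sizes to agree.
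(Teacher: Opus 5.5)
Your proof is correct: the definition of a slide fixes $I_i\setminus I_{i+1}=\{v\}$ and $I_{i+1}\setminus I_i=\{w\}$, so each step preserves cardinality, and chaining the steps gives $|A|=|B|$. The paper states this observation without proof as immediate from the definition, and your argument is exactly the one it leaves implicit.
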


\begin{observation}\label{obs:connectionAndSize1}
    Let $G$ be a graph and $A, B$ independent sets of $G$ with $|A|=|B|=1$. Then, $A \overset{G}{\leftrightsquigarrow} B$ if and only if $a\in A$ and $b \in B$ are in the same connected component of $G$.
    Moreover, if $G$ is connected, $A \overset{G}{\leftrightsquigarrow} B$.
\end{observation}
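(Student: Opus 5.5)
The statement to prove is Observation~\ref{obs:connectionAndSize1}. Write $A=\{a\}$ and $B=\{b\}$. The idea is that sliding a single token is exactly walking along the edges of $G$.

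The first step is to note that every singleton is independent. So when only one token is on the board, the independence constraint never blocks a move. Sliding from $\{u\}$ to $\{w\}$ is allowed exactly when $uw\in E(G)$.

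For the forward direction, suppose $A \overset{G}{\leftrightsquigarrow} B$ via a sequence $I_1,\dots,I_l$. By Observation~\ref{obs:SameSize}, every $I_i$ is a singleton $\{v_i\}$. Each slide requires $v_iv_{i+1}\in E(G)$, so $v_1=a, v_2,\dots,v_l=b$ is a walk in $G$. Hence $a$ and $b$ lie in the same connected component.

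For the converse, suppose $a$ and $b$ are in the same component. Take a path $a=v_1,v_2,\dots,v_l=b$ in $G$ and set $I_i=\{v_i\}$. Each $I_i$ is independent, and $I_i\setminus I_{i+1}=\{v_i\}$, $I_{i+1}\setminus I_i=\{v_{i+1}\}$ with $v_iv_{i+1}\in E(G)$. So each step is a legal slide. If $a=b$, the trivial sequence with $l=1$ works.

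The "moreover" part follows because a connected graph has only one component. No step is a real obstacle; the only care needed is handling the degenerate case $a=b$ and using the fact that singletons are always independent.
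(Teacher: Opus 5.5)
Your proof is correct. The paper states this observation without proof, treating it as immediate. Your argument is exactly the reasoning it implicitly relies on: a single-token sliding sequence is the same thing as a walk in $G$, every singleton is independent, and the case $a=b$ is covered by the length-one sequence.
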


\section{Token sliding on $P_4$-tidy graphs}

\begin{lemma}
    Let $G$ be $P_5$, $\overline{P_5}$ or $C_5$, and let $A, B$ be independent sets in $G$.
    Then, $A \overset{G}{\leftrightsquigarrow} B$ if and only if $|A|=|B|$.
\end{lemma}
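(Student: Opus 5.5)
The plan is to reduce everything to a finite check of small reconfiguration graphs. The forward direction is exactly Observation~\ref{obs:SameSize}, so only the converse needs work. Fix labelings: $P_5$ is the path $1\!-\!2\!-\!3\!-\!4\!-\!5$, $\overline{P_5}$ is its complement on the same labels, and $C_5$ is the cycle $1\!-\!2\!-\!3\!-\!4\!-\!5\!-\!1$. Assuming $|A|=|B|=k$, I will split on $k$.

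\textbf{Small $k$ and maximum independent sets.} For $k=0$ there is nothing to prove. For $k=1$, each of the three graphs is connected, so Observation~\ref{obs:connectionAndSize1} gives $A \overset{G}{\leftrightsquigarrow} B$. Next I compute independence numbers: $\alpha(C_5)=\alpha(\overline{P_5})=2$ and $\alpha(P_5)=3$. The only independent set of size $3$ in $P_5$ is $\{1,3,5\}$, so for $k=3$ we must have $A=B$. Hence the only remaining case is $k=2$, where I need, for each graph, that the graph whose vertices are the independent $2$-sets and whose edges are single slides is connected.

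\textbf{The case $k=2$.} I will handle each graph by writing down an explicit chain of slides through all independent $2$-sets.
\begin{itemize}
\item For $P_5$, the independent $2$-sets are $\{1,3\},\{1,4\},\{1,5\},\{2,4\},\{2,5\},\{3,5\}$. The chain
\[
\{1,3\}\to\{1,4\}\to\{1,5\}\to\{2,5\}\to\{2,4\}
\]
slides $3\to4$, $4\to5$, $1\to2$ and $5\to4$. The last set is reached from the chain by $\{2,5\}\to\{3,5\}$, sliding $2\to3$.
\item For $C_5$, the independent $2$-sets are the non-edges $\{i,i+2\}$, indices mod $5$. Sliding $i+2\to i+3$ gives $\{i,i+3\}=\{i+3,(i+3)+2\}$, which is again independent. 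Iterating this connects all five sets cyclically.
\item For $\overline{P_5}$, the independent $2$-sets are the edges of $P_5$: $\{1,2\},\{2,3\},\{3,4\},\{4,5\}$. The chain
\[
\{1,2\}\to\{2,3\}\to\{3,4\}\to\{4,5\}
\]
slides $1\to3$, $2\to4$ and $3\to5$. Each of these pairs is an edge of $\overline{P_5}$, because it is a non-edge of $P_5$.
\end{itemize}

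The main obstacle is only bookkeeping: making sure the independent sets are listed exhaustively and that every slide is along an edge of the correct graph. This is especially easy to get wrong for $\overline{P_5}$, where edges and non-edges are swapped relative to the path labeling. Since connectivity of each small reconfiguration graph is witnessed explicitly, any two independent sets of the same size are mutually reachable, which proves the lemma.
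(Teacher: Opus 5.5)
Your proof is correct and follows essentially the same route as the paper: Observation~\ref{obs:SameSize} for necessity, connectivity for size $1$, uniqueness of $\{1,3,5\}$ for size $3$ in $P_5$, and explicit slide chains through the independent $2$-sets for $P_5$, $\overline{P_5}$ and $C_5$. The only minor difference is that for $P_5$ the paper routes every independent $2$-set to $\{v_1,v_5\}$ instead of chaining through all six sets. Your slides and independence checks are all valid.
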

\begin{proof}    
By Observation~\ref{obs:SameSize}, we only have to prove that $|A|=|B|$ implies $A \overset{G}{\leftrightsquigarrow} B$ in each case.
Moreover, since $G$ is connected in the three cases, by Observation~\ref{obs:connectionAndSize1}, if $|A|=|B|=1$, $A \overset{G}{\leftrightsquigarrow} B$. Thus, in the following cases, we can suppose that the independent sets has size at least $2$.

If $G=P_5$, we can numbered the vertices $v_1, v_2, v_3, v_4, v_5$ in such a way that $v_i$ is adjacent to $v_{i+1}$, for each $i \in \{1,2,3,4\}$.
There is only one independent set of size $3$, $\{v_1, v_3, v_5\}$. Thus, if $|A|=3$, $A \overset{G}{\leftrightsquigarrow} B$ if and only if $A=B$ if and only if $|A|=|B|=3$.
Otherwise, if $|A|=2$, $A=\{v_i, v_j\}$ with $i<j$, let us prove that $A \overset{G}{\leftrightsquigarrow} \{v_1, v_5\}$. If $i\neq 1$, we can slide the token in $v_i$ to $v_{i-1}$ and then to $v_{i-2}$, if necessary, and we get that $A \overset{G}{\leftrightsquigarrow} \{v_1, v_j\}$. Analogously, if $j\neq 5$, we can slide the token in $v_j$ to $v_{j+1}$ an then to $v_{j+2}$, if necessary, and we get that $A \overset{G}{\leftrightsquigarrow} \{v_1, v_5\}$.
Then, if $|A|=|B|=2$, both $A \overset{G}{\leftrightsquigarrow} \{v_1, v_5\}$ and $B \overset{G}{\leftrightsquigarrow} \{v_1, v_5\}$. Therefore, if $|A|=2$, $A \overset{G}{\leftrightsquigarrow} B$ if and only if $|A|=|B|=2$.

If $G=\overline{P_5}$, we can numbered the vertices $v_1, v_2, v_3, v_4, v_5$ in such a way that $v_i$ is nonadjacent to $v_{i+1}$, for each $i \in \{1,2,3,4\}$.
There are no independent sets of size $3$, and there are exactly $4$ independent sets of size $2$, $\{v_1, v_2\}$, $\{v_2, v_3\}$, $\{v_3, v_4\}$, $\{v_4, v_5\}$, notice that we can reach $\{v_{i+1}, v_{i+2}\}$ from $\{v_{i}, v_{i+1}\}$ by sliding the token in $v_i$ to $v_{i+2}$, for each $i \in \{1,2,3\}$. Therefore, if $|A|=2$, $A \overset{G}{\leftrightsquigarrow} B$ if and only if $|A|=|B|=2$.

If $G=C_5$, we can numbered the vertices $v_1, v_2, v_3, v_4, v_5$ in such a way that $v_i$ is adjacent to $v_{i+1}$, for each $i \in \{1,2,3,4,5\}$, where subindex are modulus $5$.
There are no independent sets of size $3$, and there are exactly $5$ independent sets of size $2$, $\{v_1, v_3\}$, $\{v_1, v_4\}$, $\{v_2, v_4\}$, $\{v_2, v_5\}$, and $\{v_3, v_5\}$. Notice that, in this order, we can reach one from the previous one by sliding a token. Therefore, if $|A|=2$, $A \overset{G}{\leftrightsquigarrow} B$ if and only if $|A|=|B|=2$.
\end{proof}

\begin{lemma}\label{lemma:TrueTwins}
    Let $G$ be a graph with $v, v'\in V(G)$ true twins and $A, B$ independent sets of $G$.
    Let $G'=G[V(G)\setminus\{v'\}]$ and $A', B'$ be independent sets of $G'$ defined as $A'=A$ if $v'\notin A$ and $A'=(A\setminus \{v'\})\cup \{v\}$ if $v'\in A$, and analogously $B'=B$ if $v'\notin B$ and $B'=(B\setminus \{v'\})\cup \{v\}$ if $v'\in B$.
    Then, $A \overset{G}{\leftrightsquigarrow} B$ if and only if $A' \overset{G'}{\leftrightsquigarrow} B'$.
\end{lemma}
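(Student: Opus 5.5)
We work with the projection map $\pi\colon V(G)\to V(G')$ that is the identity on $V(G)\setminus\{v'\}$ and sends $v'$ to $v$. Because $v$ and $v'$ are adjacent, an independent set of $G$ contains at most one of them. Hence for every independent set $I$ of $G$ the set $\pi(I)$ has the same size as $I$, and it is independent in $G'$: the only new vertex is $v$, and $N_G(v)\setminus\{v'\}=N_G(v')\setminus\{v\}$, so $v$ has no neighbour in $I\setminus\{v'\}$. With this notation $A'=\pi(A)$ and $B'=\pi(B)$.

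For the forward direction, let $I_1,\dots,I_l$ be a sliding sequence in $G$ from $A$ to $B$, and project it to $\pi(I_1),\dots,\pi(I_l)$. Suppose a step slides a token from $x$ to $y$ along the edge $xy$.
\begin{itemize}
\item If $\{x,y\}=\{v,v'\}$, then $\pi(I_i)=\pi(I_{i+1})$, and we delete the repeated set.
\item Otherwise $\pi(x)\pi(y)$ is an edge of $G'$, since $v'$ and $v$ have the same neighbours outside $\{v,v'\}$. So the projected step is a valid slide in $G'$, from one independent set to another.
\end{itemize}
This gives $A'\overset{G'}{\leftrightsquigarrow}B'$.

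For the backward direction, a sliding sequence in $G'$ is also a sliding sequence in $G$, because $G'$ is an induced subgraph of $G$. Thus $A'\overset{G}{\leftrightsquigarrow}B'$. It remains to show $A\overset{G}{\leftrightsquigarrow}A'$ (and likewise for $B$).
\begin{itemize}
\item If $v'\notin A$, then $A=A'$ and there is nothing to do.
\item If $v'\in A$, slide the token from $v'$ to $v$ along the edge $vv'$. The result $A'$ is independent: $v\notin A$ because $v$ is adjacent to $v'$, and $v$ has no neighbour in $A\setminus\{v'\}$ by the twin property.
\end{itemize}
Reachability is symmetric and transitive, since a sequence can be reversed and sequences can be concatenated. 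So $A\overset{G}{\leftrightsquigarrow}A'\overset{G}{\leftrightsquigarrow}B'\overset{G}{\leftrightsquigarrow}B$.

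The only point requiring care is checking that each projected step is a legal slide, in particular the step where the token moves between $v$ and $v'$ and disappears under $\pi$. This is handled by allowing repeated sets to be deleted from the sequence. There is no real obstacle beyond this.
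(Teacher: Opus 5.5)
Your proof is correct and follows the same route as the paper. For the forward direction you project the $G$-sequence by replacing $v'$ with $v$ and delete repeated sets; for the converse you reuse the $G'$-sequence inside $G$ and prepend/append the single slide $v'\to v$. The difference is that you spell out the checks the paper leaves implicit, namely that $\pi(I)$ is independent in $G'$ and that $\pi(x)\pi(y)$ is an edge of $G'$.
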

\begin{proof}
    Suppose that $A \overset{G}{\leftrightsquigarrow} B$ and let $A=I_1, I_2, \dots, I_l=B$ such that we can obtain $I_{i+1}$ from $I_i$ by sliding a token along an edge of $G$, for every $i$.
    Define $I'_i=I_i$ if $v'\notin I_i$ and $I'_i=(I_i\setminus \{v'\})\cup \{v\}$ if $v'\in I_i$, for each $i$.
    Notice that $A'=I'_1, I'_2, \dots, I'_l=B$ is a sequence such that $I'_{i+1}=I'_i$ or $I'_{i+1}$ can be obtained from $I'_i$ by sliding a token along an edge of $G$, for every $i$. Thus, just by removing from the sequence the independent sets that are repeated, we get a sequence that certifies that $A' \overset{G'}{\leftrightsquigarrow} B'$.

    Suppose that $A' \overset{G'}{\leftrightsquigarrow} B'$ and let $A'=I_1, I_2, \dots, I_l=B'$ such that we can obtain $I_{i+1}$ from $I_i$ by sliding a token along an edge of $G'$, for every $i$.
    If $A'=(A\setminus \{v'\}))\cup\{v\}$, let $I_0=A$ and if $B'=(B\setminus \{v'\}))\cup\{v\}$, let $I_{l+1}=B$.
    The sequence $(I_0,) I_1, \dots, I_l(, I_{l+1})$, where $I_0$ and $I_{l+1}$ are only if necessary, certifies that $A \overset{G}{\leftrightsquigarrow} B$.
\end{proof}

\begin{observation}\label{obs:FalseTwins}
    Let $G$ be a graph with $v, v'\in V(G)$ false twins and $A$ independent set of $G$ such that $v,v'\in A$. If $B$ independent set of $G$ such that $A \overset{G}{\leftrightsquigarrow} B$, $v,v' \in B$.
    Moreover, $A \overset{G}{\leftrightsquigarrow} B$ if and only if $v,v' \in B$ and $A\setminus\{v,v'\} \overset{G'}{\leftrightsquigarrow} B\setminus \{v,v'\}$, where $G'=G[V(G)\setminus(N_G[v]\cup \{v'\})]$.
\end{observation}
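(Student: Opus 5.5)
The key fact is that while both false twins $v,v'$ carry tokens, neither token can ever move. By symmetry it suffices to look at $v$. Suppose $I$ is a reachable independent set containing both $v$ and $v'$, and the token on $v$ slides to a neighbour $w$. Then $w\in N_G(v)=N_G(v')$. So $w$ is adjacent to $v'$, and $v'$ still holds a token (only one token moves per step), so the new set is not independent. Hence no single move from such an $I$ moves a token off $v$ or $v'$. By induction along any reconfiguration sequence $A=I_1,\dots,I_l=B$, every $I_i$ contains $v$ and $v'$. This gives the first claim, $v,v'\in B$.

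For the equivalence I would first note that every $I_i$ contains $v,v'$ and is independent. Therefore $I_i\setminus\{v,v'\}$ avoids $N_G(v)$ (which equals $N_G(v')$) and also avoids $v,v'$, so it lies in $V(G')$ with $G'=G[V(G)\setminus(N_G[v]\cup\{v'\})]$. Each move in the sequence slides a token from some $x\neq v,v'$ to a neighbour $y$, and $y$ is also in $V(G')$ by the same argument applied to $I_{i+1}$. Thus $xy$ is an edge of $G'$, and the sequence $I_i\setminus\{v,v'\}$ certifies $A\setminus\{v,v'\}\overset{G'}{\leftrightsquigarrow}B\setminus\{v,v'\}$.

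Conversely, suppose $v,v'\in B$ and there is a sequence $J_1,\dots,J_l$ in $G'$ from $A\setminus\{v,v'\}$ to $B\setminus\{v,v'\}$. Set $I_i=J_i\cup\{v,v'\}$. Each $I_i$ is independent in $G$: $J_i$ is independent in $G'$ (an induced subgraph), $v$ and $v'$ are nonadjacent, and $J_i$ contains no neighbour of $v$ or $v'$ because $N_G(v')=N_G(v)$ is removed. Each slide in $G'$ is along an edge of $G$, so this sequence witnesses $A\overset{G}{\leftrightsquigarrow}B$.

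The argument is routine, and the one point needing care is the frozen-token claim. It uses both that only one token moves at a time and that the twins share their whole open neighbourhood.
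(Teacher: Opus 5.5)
Your proof is correct and follows essentially the same route as the paper. The paper shows the twins' tokens are frozen by taking the minimal index $i_0$ at which one of $v,v'$ leaves the set, which is equivalent to your induction, and then it removes or adds $\{v,v'\}$ along the sequence exactly as you do. Your explicit check that each slide in the reduced sequence has both endpoints in $V(G')$, and is therefore an edge of the induced subgraph $G'$, fills in a detail the paper leaves implicit.
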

\begin{proof}
    Suppose that $v,v' \in B$ and $A\setminus\{v,v'\} \overset{G'}{\leftrightsquigarrow} B\setminus \{v,v'\}$, that is, there is a sequence $A\setminus\{v,v'\}=I_1, \dots, I_l=B\setminus\{v,v'\}$ of independent sets of $G'$ such that we can obtain $I_{i+1}$ from $I_i$ by sliding a token along an edge of $G'$, for each $i\in 1, \dots, l-1$. The sequence $A=I_1\cup \{v,v'\}, I_2\cup \{v,v'\}, \dots, I_l\cup \{v,v'\}=B$ is such that $I_{i+1}\cup \{v,v'\}$ can be reached from $I_i\cup \{v,v'\}$ by sliding a token along an edge of $G$ and thus, it certifies that $A \overset{G}{\leftrightsquigarrow} B$. 
    
    Suppose that $A \overset{G}{\leftrightsquigarrow} B$ and $\{v, v'\}\nsubseteq B$, let $A=I_1, I_2, \dots I_l=B$ such that $I_{i+1}$ can be obtained from $I_i$ by sliding a token along an edge of $G$, for each $i\in 1, \dots, l-1$. Let $i_0$ be the minimum $i\in \{1, \dots, l\}$ such that $\{v, v'\}\nsubseteq I_{i_0}$. 
    Suppose wlog that $v'\in I_{i_0-1}\setminus I_{i_0}$, this implies that $v \in I_{i_0}$, by definition of $i_0$, and let $w \in I_{i_0}\setminus I_{i_0-1}$, but then $w\in N_G(v')=N_G(v)$ and $v\in I_{i_0}$, a contradiction with the fact that $I_{i_0}$ is an independent set. Therefore, $A \overset{G}{\leftrightsquigarrow} B$ implies $\{v, v'\}\subseteq B$.
    Moreover, $\{v, v'\}\subseteq I_i$ for every $i$ and thus, $A\setminus \{v,v'\}=I_1\setminus \{v,v'\}, I_2\setminus \{v,v'\}, \dots, I_l\setminus \{v,v'\}=B\setminus \{v,v'\}$ is a sequence that certifies that $A\setminus\{v,v'\} \overset{G'}{\leftrightsquigarrow} B\setminus \{v,v'\}$.
\end{proof}

\begin{lemma}\label{lemma:urchin}
    Let $G$ be type 1 (that is, an urchin), and let $A, B$ be independent sets in $G$. Then, $A \overset{G}{\leftrightsquigarrow} B$ if and only if $|A|=|B|$.
\end{lemma}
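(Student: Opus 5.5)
By Observation~\ref{obs:SameSize} it suffices to show that $|A|=|B|$ implies $A \overset{G}{\leftrightsquigarrow} B$. The plan is to choose, for each size $m$, one canonical independent set $T_m$ and prove that every independent set of size $m$ reaches it. Write the urchin of size $k\geq 3$ with clique $C=\{c_1,\dots,c_k\}$ and stable set $S=\{s_1,\dots,s_k\}$, where $s_i$ is adjacent exactly to $c_i$. An independent set contains at most one body vertex. Hence the independent sets are of two kinds: subsets $X\subseteq S$, and sets $\{c_i\}\cup X$ with $X\subseteq S\setminus\{s_i\}$. If $|A|=1$ we are done by Observation~\ref{obs:connectionAndSize1}, since $G$ is connected. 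The case $m=k$ is also special. There $S$ is the only independent set of size $k$, because a set containing $c_i$ has at most $k-1$ legs, so $A=B=S$. It remains to treat $2\le m\le k-1$, and I would take $T_m=\{s_1,\dots,s_m\}$.

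First, reduce to sets inside $S$. If $A=\{c_i\}\cup X$ with $X\subseteq S\setminus\{s_i\}$, slide the token from $c_i$ to $s_i$. This keeps the set independent, since $s_i$ is adjacent only to $c_i$. Repeating this for $B$, we may assume $A,B\subseteq S$.

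Second, connect any two subsets of $S$ of the same size $m\le k-1$. It is enough to show that we can exchange one leg $s_i\in A$ for one leg $s_j\notin A$ with $j\neq i$; induction on $|A\setminus B|$ then finishes. The exchange is the path $s_i\to c_i\to c_j\to s_j$, and each step must stay independent. When the token sits at $c_i$, the current set is $(A\setminus\{s_i\})\cup\{c_i\}$, which is independent because $c_i$'s only leg neighbor $s_i$ has been vacated. When it moves to $c_j$, the set is $(A\setminus\{s_i\})\cup\{c_j\}$, which needs $s_j\notin A$; this holds by the choice of $j$. The final step $c_j\to s_j$ is then clearly fine.

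The main point to verify is exactly this middle step: the token passes through body vertices, and each such vertex must be free of conflicts with the legs currently holding tokens. Unlike the starfish, in the urchin each body vertex conflicts with only one leg, so this works whenever a free leg exists, that is, whenever $m<k$. This also explains why the case $m=k$ is degenerate and had to be handled separately.
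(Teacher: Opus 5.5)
Your overall strategy matches the paper's. You push any body token $c_i$ down to its private leg $s_i$, then exchange legs along $s_i\to c_i\to c_j\to s_j$, which is legal because $s_j$ is vacant, and you induct. The paper inducts on $|A\setminus I|$ toward $I=\{s_1,\dots,s_l\}$ and you induct on $|A\setminus B|$; that difference is immaterial. Your independence checks for the reduction and exchange steps are correct.

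However, your treatment of the case $m=k$ rests on a miscount. A set containing $c_i$ has at most $k-1$ legs, but it also contains $c_i$ itself. So $\{c_i\}\cup(S\setminus\{s_i\})$ is an independent set of size $k$ for every $i$; it is one of the sets $\{c_i\}\cup X$ you listed just before. Hence $S$ is not the only independent set of size $k$, and your conclusion ``$A=B=S$'' is false. Since you then restrict to $2\le m\le k-1$, a pair such as $A=\{c_1,s_2,\dots,s_k\}$, $B=S$ is not covered by your argument as written.

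The repair is immediate and removes the special case altogether. Your reduction step $c_i\to s_i$ is valid for every size $m$. Once $A,B\subseteq S$ with $|A|=|B|$, either $A=B$, which is forced when $m=k$, or $B\setminus A\neq\emptyset$ supplies the vacant leg $s_j$ needed for the exchange. So no hypothesis $m<k$ is required, and this uniform argument is exactly how the paper proceeds.
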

\begin{proof}
    Let $(C, S)$ be the $p$-partition of $G$.
    Let us prove that $A \overset{G}{\leftrightsquigarrow} I$ where $I=\{s_1, s_2, \dots, s_l\}$ and $l=|A|$, by induction in $|A\setminus I|$.
    Suppose wlog that $A\subseteq S$, otherwise, if $c_j\in A \cap C$ for some $j$, we can slice the token in $c_j$ to $s_j$ to get an independent set $A'$, such that $A'\subseteq S$. 
    If $|A\setminus I|=0$, then $A=I$ and there is nothing to prove.
    If $|A\setminus I|\geq 1$, let $j>l$ be such that $s_j \in A$ or $c_j\in A$ and $i\leq l$ such that $s_i\in I\setminus A$. We can slice the token in $s_j$ to $c_j$ (only if $s_j\in A$, otherwise skip this first step), then to $c_i$ and finally to $s_i$. Then, we get an independent set $A'$ such that $A \overset{G}{\leftrightsquigarrow} A'$ and $|A'\setminus I|<|A\setminus I|$, by inductive hypothesis, $A' \overset{G}{\leftrightsquigarrow} I$ and then, $A \overset{G}{\leftrightsquigarrow} I$.
    Then, if $G$ is an urchin and $|A|=|B|=l$, $A \overset{G}{\leftrightsquigarrow} I$ and $B \overset{G}{\leftrightsquigarrow} I$, thus, $A \overset{G}{\leftrightsquigarrow} B$. 
    Therefore, by Observation~\ref{obs:SameSize}, $A \overset{G}{\leftrightsquigarrow} B$ if and only if $|A|=|B|$.
\end{proof}

\begin{lemma}\label{lemma:quasi-urchin}
    Let $G$ be a quasi-urchin, and let $A, B$ be independent sets in $G$.
    \begin{enumerate}[(a)]
        \item If $G$ is type 5 or 9, $A \overset{G}{\leftrightsquigarrow} B$ if and only if $|A|=|B|$.
        
        \item If $G$ is type 3 and $c, c'$ are the false twins of $G$, $A \overset{G}{\leftrightsquigarrow} B$ if and only if $A=B$ or $|A|=|B|$, $\{c, c'\}\nsubseteq A$ and $\{c, c'\}\nsubseteq B$.
        
        \item If $G$ is type 7 and $s, s'$ are the false twins of $G$, $A \overset{G}{\leftrightsquigarrow} B$ if and only if  $|A|=|B|$ and $\{s, s'\}\subseteq A \Leftrightarrow \{s,s'\}\subseteq B$.
       
    \end{enumerate}
\end{lemma}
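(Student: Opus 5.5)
The plan is to reduce each case either to Lemma~\ref{lemma:urchin} or to Observation~\ref{obs:FalseTwins}. The lemma's type numbering does not match the definitions exactly: part (a) must be the two cases where the replaced vertex becomes a pair of \emph{true} twins, while (b) and (c) are the cases with \emph{false} twins in the body and in the legs. I will argue according to that twin structure. Throughout, Observation~\ref{obs:SameSize} gives the necessity of $|A|=|B|$, so only the remaining conditions need work.

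\emph{Part (a).} If the duplicated vertex gives true twins $v,v'$, apply Lemma~\ref{lemma:TrueTwins}. Deleting $v'$ leaves an urchin of size $k\ge 3$, and the induced sets $A',B'$ satisfy $|A'|=|A|$ and $|B'|=|B|$, because $v$ and $v'$ are adjacent and so at most one of them lies in an independent set. Lemma~\ref{lemma:urchin} then says $A'\overset{G'}{\leftrightsquigarrow}B'$ if and only if $|A'|=|B'|$, which gives (a).

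\emph{Part (b): body false twins $c,c'$ replacing $c_i$.} The non-neighbours of $c$ are $c'$ and $S\setminus\{s_i\}$. If $\{c,c'\}\subseteq A$, Observation~\ref{obs:FalseTwins} forces $\{c,c'\}\subseteq B$. It also reduces the question to reachability in $G'=G[S\setminus\{s_i\}]$, which is edgeless, so there $A\setminus\{c,c'\}$ can only reach itself and hence $A=B$. The same holds with the roles of $A$ and $B$ swapped, since reachability is symmetric. Otherwise neither set contains both twins, and I would copy the induction of Lemma~\ref{lemma:urchin} towards a canonical set $I=\{s_1,\dots,s_l\}$:
\begin{itemize}
\item First push each body token to its own leg; a token on $c$ or $c'$ goes to $s_i$.
\item Then reroute an out-of-place leg token through the body.
\end{itemize}
When the route must use the body vertex for index $i$, I use either twin. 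No intermediate set can contain both $c$ and $c'$, since a body token only appears when no other body token is present. This step needs the most care: one must check that every move stays independent in the modified graph, using that a body vertex $c_j$ is adjacent only to $s_j$ among the legs and that the two twins share that leg.

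\emph{Part (c): leg false twins $s,s'$ adjacent only to $c_i$.} If $\{s,s'\}\subseteq A$, Observation~\ref{obs:FalseTwins} shows the twins stay occupied, so $B$ must contain them too. It also reduces the question to $G'=G\setminus\{c_i,s,s'\}$, which is an urchin of size $k-1\ge 2$; the argument of Lemma~\ref{lemma:urchin} works for size $2$, the $P_4$, as well. Hence reachability holds iff $|A|=|B|$. If neither set contains both twins, again run the urchin induction towards a canonical $I$ containing $s$ but not $s'$. This never places tokens on both twins: to reach $s'$ a token must pass through $c_i$, which is impossible while $s$ is occupied. That gives the stated equivalence.
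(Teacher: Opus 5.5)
Your proposal is correct and follows essentially the same route as the paper. Part (a) uses Lemma~\ref{lemma:TrueTwins} together with Lemma~\ref{lemma:urchin}. In (b) and (c), Observation~\ref{obs:FalseTwins} handles a set containing both false twins; otherwise tokens are pushed onto the legs and rerouted as in Lemma~\ref{lemma:urchin}. The paper's proof also reads the type labels by twin structure, as you did. The one detail to make explicit in (c), which the paper states, is this: if a token sits on $s'$, first slide it $s'\to c_i\to s$, because while $s'$ is occupied no other token can be routed through $c_i$ to reach $s$.
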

\begin{proof}    
If $G$ is type 5 or 9, by Lemmas~\ref{lemma:TrueTwins} and \ref{lemma:urchin}, $A \overset{G}{\leftrightsquigarrow} B$ if and only if $|A|=|B|$.

If $G$ is type 3, 
let $(C^*, S^*)$ be the $p$-partition of $G$ and $c'_1, c_1\in C^*$ the false twins.
If $c_1, c'_1 \in A$, by Observation~\ref{obs:FalseTwins}, $A \overset{G}{\leftrightsquigarrow} B$ if and only if $c_1, c'_1\in B$ and $A\setminus \{c_1, c'_1\} \overset{G'}{\leftrightsquigarrow} B\setminus \{c_1, c'_1\} $, where $G'=G[V(G)\setminus(N_G[c_1]\cup\{c'_1\})]$. Since $G'$ is a stable set $A\setminus \{s_1, s'_1\} \overset{G'}{\leftrightsquigarrow} B\setminus \{c_1, c'_1\} $ if and only if $A\setminus \{c_1, c'_1\}=B\setminus \{c_1, c'_1\}$.
Therefore, if $c_1, c'_1 \in A$, $A \overset{G}{\leftrightsquigarrow} B$ if and only if $A=B$.
If $c_1, c'_1$ are not both in $A$ and  there is some $c_j\in A$ (or $c'_1 \in A$), we can slide the token in $c_j$ to $s_j$ (or $c'_1$ to $s_1$) and get an independent set $A'$ such that $A \overset{G}{\leftrightsquigarrow} A'$ and $A'\subseteq S$. Thus, if $c_1, c'_1$ are not both in $A$, we can assume wlog that $A\subseteq S$ and using the same procedure as in the proof of Lemma~\ref{lemma:urchin}, $A \overset{G}{\leftrightsquigarrow} I$ where $I=\{s_1, \dots, s_l\}$ and $l=|A|$. Therefore, $c_1, c'_1$ are not both in $A$, $A \overset{G}{\leftrightsquigarrow} B$ if and only if $c_1, c'_1$ are not both in $B$ and $|A|=|B|$.
Hence, $A \overset{G}{\leftrightsquigarrow} B$ if and only if $A=B$ or $|A|=|B|$, $\{c_1, c_1'\}\nsubseteq A$ and $\{c_1, c_1'\}\nsubseteq B$, where $c_1,c_1'$ are the false twins of $G$.

If $G$ is type 7, 
let $(C^*, S^*)$ be the $p$-partition of $G$ and $s'_1, s_1\in S^*$ the false twins.
If $s_1, s'_1 \in A$, by Observation~\ref{obs:FalseTwins}, $A \overset{G}{\leftrightsquigarrow} B$ if and only if $s_1, s'_1\in B$ and $A\setminus \{s_1, s'_1\} \overset{G'}{\leftrightsquigarrow} B\setminus \{s_1, s'_1\} $, where $G'=G[V(G)\setminus\{s_1, s'_1, c_1\}]$. Since $G'$ is an urchin of size $k-1$ or a $K_2$, by Lemma~\ref{lemma:urchin},  $A\setminus \{s_1, s'_1\} \overset{G'}{\leftrightsquigarrow} B\setminus \{s_1, s'_1\} $ if and only if $|A\setminus \{s_1, s'_1\}|=|B\setminus \{s_1, s'_1\}|$.
Therefore, if $s_1, s'_1 \in A$, $A \overset{G}{\leftrightsquigarrow} B$ if and only if $s_1, s'_1\in B$ and $|A|=|B|$.
If $s_1, s'_1$ are not both in $A$ and there is some $c_j\in A$, we can slide the token in $c_j$ to $s_j$ and get an independent set $A'$ such that $A \overset{G}{\leftrightsquigarrow} A'$ and $A'\subseteq S^*$. If $s_1, s'_1$ are not both in $A$, $A\subseteq S^*$ and $s'_1\in A$, we can slice the token in $s'_1$ to $c_1$ and then to $s_1$ and get an independent set $A'$ such that $A \overset{G}{\leftrightsquigarrow} A'$ and $A'\subseteq S$. Thus, we can assume that $A\subseteq S$ and using the same procedure of Lemma's~\ref{lemma:urchin} proof, $A \overset{G}{\leftrightsquigarrow} I$ where $I=\{s_1, \dots, s_l\}$ and $l=|A|$. Therefore, $s_1, s'_1$ are not both in $A$, $A \overset{G}{\leftrightsquigarrow} B$ if and only if $s_1, s'_1$ are not both in $B$ and $|A|=|B|$.
Hence, $A \overset{G}{\leftrightsquigarrow} B$ if and only if  $|A|=|B|$ and $\{s_1, s_1'\}\subseteq A \Leftrightarrow \{s_1,s_1'\}\subseteq B$.
\end{proof}

\begin{lemma}\label{lemma:starfish}
Let $G$ be type 2 (that is, a starfish) and let $A, B$ be independent sets in $G$. Then, $A \overset{G}{\leftrightsquigarrow} B$ if and only if $|A|=|B|\leq2$ or $A=B$.
\end{lemma}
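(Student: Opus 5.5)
Let $(C,S)$ be the $p$-partition of the starfish $G$, with $C=\{c_1,\dots,c_k\}$ a clique, $S=\{s_1,\dots,s_k\}$ stable, and $s_i c_j\in E(G)$ if and only if $i\neq j$. We begin by listing the independent sets of $G$. Since $C$ is a clique, an independent set contains at most one vertex of $C$. If $c_j$ lies in it, then every other vertex must be a non-neighbour of $c_j$ in $S$, and the only such vertex is $s_j$. Hence the independent sets are the subsets of $S$, the singletons $\{c_j\}$, and the pairs $\{c_j,s_j\}$. In particular, every independent set of size at least $3$ is a subset of $S$. Observation~\ref{obs:SameSize} handles one direction, so it remains to prove two things: any two independent sets of equal size at most $2$ are reachable from each other, and an independent set of size at least $3$ can only reach itself.

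For size $1$ we use Observation~\ref{obs:connectionAndSize1}, since $G$ is connected. For size $2$ we fix the target $\{s_1,s_2\}$. The set $\{c_j,s_j\}$ reaches $\{s_i,s_j\}$ for any $i\neq j$ by sliding $c_j$ to $s_i$. This is a legal move because $c_js_i$ is an edge and $s_i,s_j$ are non-adjacent. A pair $\{s_i,s_j\}$ with $i\neq j$ can move to $\{s_{i'},s_j\}$ as follows: slide $s_i$ to $c_j$ (they are adjacent since $i\neq j$, and $c_j$ is non-adjacent to $s_j$), then slide $c_j$ to $s_{i'}$ for any $i'\neq j$. Using such moves we can change one coordinate at a time and reach $\{s_1,s_2\}$. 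So all size-$2$ independent sets are mutually reachable. The $k=2$ case needs a quick check: then $G=P_4$, which is connected, and the moves above still apply.

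For size $l\ge 3$, let $A\subseteq S$ and consider any token in $A$, say the one on $s_i$. Its neighbours are exactly the $c_j$ with $j\neq i$. Since $|A|\ge 3$, at least two tokens remain on other leaves. For any $j\neq i$, at least one of those remaining leaves is some $s_m$ with $m\neq j$, and $s_m$ is adjacent to $c_j$. So sliding $s_i$ to $c_j$ never produces an independent set, and no move is available from $A$. Therefore $A$ is isolated in the reconfiguration graph, and $A\overset{G}{\leftrightsquigarrow} B$ holds if and only if $A=B$.

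The main obstacle is the size-$2$ connectivity argument, in particular making sure the intermediate sets really are independent. The key fact is that $c_j$ is non-adjacent only to $s_j$, and the moves above are chosen so that the token kept in place is always on $s_j$.
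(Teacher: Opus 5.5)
Your proof is correct and follows essentially the same route as the paper. Both handle size $1$ via Observation~\ref{obs:connectionAndSize1}, show all size-$2$ independent sets reach a fixed canonical set (you use $\{s_1,s_2\}$, the paper uses $\{s_1,c_1\}$), and use the fact that an independent set containing $c_j$ lies inside $\{c_j,s_j\}$ to conclude that independent sets of size at least $3$ are subsets of $S$ from which no token can move.
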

\begin{proof}
    Let $(C, S)$ be the $p$-partition of $G$.
    If $|A|=1$, this theorem is a consequence of Observation~\ref{obs:connectionAndSize1}.
    If $|A|=2$, let us prove that $A \overset{G}{\leftrightsquigarrow} \{s_1, c_1\}$. 
    If $A=\{s_i, c_i\}$ for some $i\neq 1$, we can slice the token in $c_i$ to $s_1$ and after that, slice the token in $s_i$ to $c_1$.
    If $A=\{s_i, s_j\}$ for some $i\neq j$, we can slice the token in $s_j$ to $c_i$ and we get the same situation than before.
    In both cases, this procedure proves that $A \overset{G}{\leftrightsquigarrow} \{s_1, c_1\}$. Analogously, $B \overset{G}{\leftrightsquigarrow} \{s_1, c_1\}$ and thus, $A \overset{G}{\leftrightsquigarrow} B$ if $|A|=|B|=2$.
    
    Otherwise, notice that each vertex $c_j\in C$ has degree $2k-2$ and then, every independent set containing $c_j$ has size at most $2$. Thus, if $|A|\geq 3$, $A\subseteq S$ and it is not possible to slide a token, since vertices of $A$ are only adjacent to vertices in $C$ and there is no independent set of size greater to $2$ containing vertices in $C$. 
    Hence, $A \overset{G}{\leftrightsquigarrow} B$ implies $A=B$.
\end{proof}

\begin{lemma} \label{lemma:quasi-starfish}
    Let $G$ be a quasi-starfish and let $A, B$ be independent sets in $G$. 
    \begin{enumerate}[(a)]
        \item If $G$ is type 6 or 10, $A \overset{G}{\leftrightsquigarrow} B$ if and only if 
        \begin{itemize}
            \item $|A|=|B|\leq 2$; or
            \item $|A|=|B|$ and $A\setminus \{v,v'\}=B\setminus \{v,v'\}$, where $v,v'$ are the twins of $G$.
        \end{itemize} 
        \item If $G$ is type 4 and $c, c'$ are the false twins of $G$, $A \overset{G}{\leftrightsquigarrow} B$ if and only if $A=B$ or $\{c, c'\}\nsubseteq A$, $\{c, c'\}\nsubseteq B$, and $|A|=|B|\leq 2$.
        \item If $G$ is type 8 and $s,s'$ be the false twins of $G$, $A \overset{G}{\leftrightsquigarrow} B$ if and only if one of the following holds
        \begin{itemize}
            \item $\{s, s'\}\subseteq A$, $\{s, s'\}\subseteq B$, and $|A|=|B|\leq 3$ or $A=B$;
            \item $\{s, s'\}\nsubseteq A$, $\{s, s'\}\nsubseteq B$, and $|A|=|B|\leq 2$ or $A=B$;
        \end{itemize}
    \end{enumerate}
\end{lemma}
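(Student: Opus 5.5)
We handle the three parts separately. In each part we either reduce to the starfish case (Lemma~\ref{lemma:starfish}) or argue directly on the structure, as in the proof of Lemma~\ref{lemma:quasi-urchin}. Throughout, let $(C^*,S^*)$ be the $p$-partition. Name the vertices so that the replaced vertex corresponds to index $1$, i.e.\ to $c_1$ or to $s_1$. Necessity of $|A|=|B|$ always comes from Observation~\ref{obs:SameSize}, and the case $|A|=|B|=1$ always comes from Observation~\ref{obs:connectionAndSize1}, since $G$ is connected.

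\emph{Part (a).} The twins $v,v'$ are true twins, so at most one of them lies in an independent set. We apply Lemma~\ref{lemma:TrueTwins} to pass to $G'=G-v'$, which is a starfish, and to the sets $A',B'$. Lemma~\ref{lemma:starfish} then says $A'\overset{G'}{\leftrightsquigarrow}B'$ if and only if $|A'|=|B'|\le 2$ or $A'=B'$. The replacement $v'\mapsto v$ does not change sizes. Moreover, $A'=B'$ holds exactly when $|A|=|B|$ and $A\setminus\{v,v'\}=B\setminus\{v,v'\}$. This is because equal sizes force $A$ and $B$ to contain the same number (at most one) of the twins, and that twin is mapped to $v$. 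This yields (a) directly.

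\emph{Part (b).} Here $c,c'$ are non-adjacent body vertices. Their common neighbourhood is $(C^*\setminus\{c,c'\})\cup(S^*\setminus\{s_1\})$.
\begin{itemize}
\item If $\{c,c'\}\subseteq A$, we use Observation~\ref{obs:FalseTwins}. The graph $G-(N[c]\cup\{c'\})$ is just $\{s_1\}$, hence stable, so the only set reachable from $A$ is $A$ itself.
\item Otherwise, every independent set that contains a body vertex and not both of $c,c'$ has size at most $2$. So if $|A|\ge 3$, then $A\subseteq S^*$ and $A$ is frozen. Indeed, a leg token can only move to a body vertex. The resulting set would then have size $\ge 3$ and contain a body vertex, and this is only possible when the set is $\{c,c'\}\cup\{s_1\}$, which is excluded.
\item For $|A|=2$, we reproduce the argument of Lemma~\ref{lemma:starfish}. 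We show that every such $A$ (not containing both twins) reaches a fixed set $\{s_j,c_j\}$, using $c$ or $c'$ in the role of $c_1$ where needed. We must check that no intermediate set equals $\{c,c'\}$, and that $\{c,s_1\}$ and $\{c',s_1\}$ are connected to the rest.
\end{itemize}

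\emph{Part (c).} Here $s,s'$ are false-twin legs, adjacent to all of $C^*\setminus\{c_1\}$.
\begin{itemize}
\item If $\{s,s'\}\subseteq A$, we apply Observation~\ref{obs:FalseTwins}. Then $G'=G-(N[s]\cup\{s'\})$ is the star with centre $c_1$ and leaves $s_2,\dots,s_k$. On a star, all size-$1$ sets are mutually reachable, while sets of size $\ge 2$ consist of leaves only and are frozen. Hence $A$ and $B$ are connected exactly when $|A|-2\le 1$ or $A=B$, which gives the first bullet.
\item If $A$ does not contain both twins, then any independent set containing a body vertex has size at most $2$. The only size-$3$ option would be $c_1$ with both $s,s'$. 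Therefore sets of size $\ge 3$ are frozen, by the same argument as in Lemma~\ref{lemma:starfish}.
\item For size $2$, we again reduce everything to one fixed set, for instance $\{s_2,c_2\}$. This needs a few explicit moves. For example, $\{c_1,s'\}\to\{s_2,s'\}\to\{s_2,c_2\}$ uses the edges $c_1s_2$ and $s'c_2$, and $\{s,c_1\}$ is handled symmetrically.
\end{itemize}

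The main obstacle is the size-$2$ connectivity in parts (b) and (c). One must verify, by a small case analysis over which of the special vertices $c,c',c_1,s,s',s_1$ occur in $A$, that each size-$2$ set allowed by the statement can be slid to the chosen canonical set without ever passing through the forbidden configuration. Everything else follows from the observations and lemmas already established.
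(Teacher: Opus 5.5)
Your proposal is correct and follows essentially the same route as the paper: Lemma~\ref{lemma:TrueTwins} plus Lemma~\ref{lemma:starfish} for (a), Observation~\ref{obs:FalseTwins} with $G'=\{s_1\}$ (resp.\ the star $K_{1,k-1}$) when both false twins lie in $A$, the frozen-set argument for size at least $3$, and explicit slides to a canonical pair for size $2$. The size-$2$ check you defer in (b) is what the paper writes out; the only move not already in the starfish argument is $\{s_1,c'\}\to\{s_1,s_2\}\to\{s_1,c\}$. Your worry about passing through $\{c,c'\}$ is vacuous, since that set is isolated by Observation~\ref{obs:FalseTwins}.
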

\begin{proof}   
If $G$ is type 6 or 10, by Lemmas~\ref{lemma:TrueTwins} and \ref{lemma:starfish}, $A \overset{G}{\leftrightsquigarrow} B$ if and only if $|A|=|B|\leq 2$ or $A$ and $B$ only differ on a twin, that is, $|A|=|B|$ and $A\setminus \{v,v'\}=B\setminus \{v,v'\}$, where $v,v'$ are true twins of $G$.

If $G$ is type 4, let $(C^*, S^*)$ be the $p$-partition of $G$ and $c_1, c'_1\in C^*$ the false twins.
If $c_1, c'_1 \in A$, by Observation~\ref{obs:FalseTwins}, $A \overset{G}{\leftrightsquigarrow} B$ if and only if $c_1, c'_1\in B$ and $A\setminus \{c_1, c'_1\} \overset{G'}{\leftrightsquigarrow} B\setminus \{c_1, c'_1\} $, where $G'=G[V(G)\setminus(N_G[c_1]\cup\{c'_1\})]$. Since $G'$ has at most one vertex, $A\setminus \{c_1, c'_1\} \overset{G'}{\leftrightsquigarrow} B\setminus \{c_1, c'_1\} $ if and only if $A\setminus \{c_1, c'_1\}=B\setminus \{c_1, c'_1\}$.
Therefore, if $c_1, c'_1 \in A$, $A \overset{G}{\leftrightsquigarrow} B$ if and only if $A=B$.
If $c_1, c'_1$ are not both in $A$. 
As in Lemma~\ref{lemma:starfish}, if $|A|\geq 3$, $A \overset{G}{\leftrightsquigarrow} B$ if and only if $A=B$.
If $|A|=1$, it is a consequence of Observation~\ref{obs:connectionAndSize1}. 
If $|A|=2$, let us prove that $A \overset{G}{\leftrightsquigarrow} \{s_1, c_1\}$.
If $A=\{s_1, c'_1\}$, we can slice the token in $c'_1$ to $s_2$ and after that, to $c_1$.
If $A=\{s_i, c_i\}$ for some $i\neq 1$, we can slice the token in $c_i$ to $s_1$ and after that, slice the token in $s_i$ to $c_1$.
If $A=\{s_i, s_j\}$ for some $i\neq j$, we can slice the token in $s_j$ to $c_i$ and we get the same situation than before.
In all the cases, this procedure proves that $A \overset{G}{\leftrightsquigarrow} \{s_1, c_1\}$. Analogously, $B \overset{G}{\leftrightsquigarrow} \{s_1, c_1\}$ and thus, $A \overset{G}{\leftrightsquigarrow} B$ if $|A|=|B|=2$.

If $G$ is type 8, let $(C^*, S^*)$ be the $p$-partition of $G$ and $s_1, s'_1\in S^*$ the false twins.
If $s_1, s'_1 \in A$, by Observation~\ref{obs:FalseTwins}, $A \overset{G}{\leftrightsquigarrow} B$ if and only if $s_1, s'_1\in B$ and $A\setminus \{s_1, s'_1\} \overset{G'}{\leftrightsquigarrow} B\setminus \{s_1, s'_1\} $, where $G'=G[V(G)\setminus(N_G[s_1]\cup \{s'_1\})]$. Since $G'$ is isomorphic to $K_{1, k-1}$,  $A\setminus \{s_1, s'_1\} \overset{G'}{\leftrightsquigarrow} B\setminus \{s_1, s'_1\} $ if and only if $|A\setminus \{s_1, s'_1\}|=|B\setminus \{s_1, s'_1\}|\leq 1$ or $A\setminus \{s_1, s'_1\}=B\setminus \{s_1, s'_1\}$.
Therefore, if $s_1, s'_1 \in A$, $A \overset{G}{\leftrightsquigarrow} B$ if and only if $s_1, s'_1\in B$ and $|A|=|B|\leq 3$ or $A=B$.
Suppose now that $s_1, s'_1$ are not both in $A$.
As in Lemma~\ref{lemma:starfish}, if $|A|\geq 3$, $A \overset{G}{\leftrightsquigarrow} B$ if and only if $A=B$.
If $|A|=1$, it is a consequence of Observation~\ref{obs:connectionAndSize1}. 
If $|A|=2$, let us prove that $A \overset{G}{\leftrightsquigarrow} \{s_1, c_1\}$.
If $A=\{s'_1, c_1\}$, we can slice the token in $c_1$ to $s_2$, after that, slice the token in $s'_1$ to $c_2$ and then to $s_1$ and finally, slice the token in $s_2$ to $c_1$.
If $A=\{s_i, c_i\}$ for some $i\neq 1$ or $A=\{s_i, s_j\}$ for some $i,j$, the procedure is exactly like the one in the case $G$ is type 4.
In all the cases, this procedure proves that $A \overset{G}{\leftrightsquigarrow} \{s_1, c_1\}$. Analogously, $B \overset{G}{\leftrightsquigarrow} \{s_1, c_1\}$ and thus, $A \overset{G}{\leftrightsquigarrow} B$ if $|A|=|B|=2$.
\end{proof}

Since every graph can be uniquely obtained from its $p$-components and weak vertices by a finite sequence of the operations $\cup$, $\vee$, and $\veebar$ (see Theorem~\ref{th-p}), we need to analyze the relationship between token sliding and these operations.
The $\cup$ and $\vee$ operations were studied in~\cite{KMM12} in order to solve the token sliding problem in cographs. Let $A, B$ be independent sets of a graph $G$.
If $G=G_1\cup G_2$, $A \overset{G}{\leftrightsquigarrow} B$ if and only if $A\cap V_1 \overset{G_1}{\leftrightsquigarrow} B\cap V_1$  and $A \cap V_2\overset{G_2}{\leftrightsquigarrow} B\cap V_2$.
If $G=G_1\vee G_2$, $A \overset{G}{\leftrightsquigarrow} B$ if and only if $|A|=|B|=1$ or $A, B\subseteq V(G_i)$ and $A \overset{G_i}{\leftrightsquigarrow} B$, for some $i \in \{1,2\}$.

It remains to study the case $G=G_1\veebar G_2$ with $G_1$ being a $p$-separable $p$-component of $G$. Since we are interested in solving the problem in $P_4$-tidy graphs, by Proposition~\ref{pro$p$-tidy}, we can assume that $G_1$ is a quasi-urchin or a quasi-starfish.

\begin{proposition}
    Let $G$ be a graph such that $G=G_1\veebar G_2$ where $G_1$ is quasi-urchin or a quasi-starfish with $p$-partition $(V_1^1, V_1^2)$, $A, B$ independent sets of $G$ and let $v,v'$ be the pair of twins in $G_1$, when there is one.
    Then, $A \overset{G}{\leftrightsquigarrow} B$ if and only if one of the following conditions holds:
    \begin{enumerate}[(a)]
        \item \label{item:2verticesInV2} $|A\cap V_2|=|B\cap V_2|\geq 2$, $A\cap V_2 \overset{G_2}{\leftrightsquigarrow} B\cap V_2$ and 
        \begin{itemize}
            \item $A\cap V_1^2 =  B\cap V_1^2$, or
            \item if $v, v' \in V_1^2$ are true twins, $A\setminus \{v,v'\}\cap V_1^2 =  B\setminus\{v,v'\}\cap V_1^2$ and $|A\cap \{v,v'\}|=|B\cap \{v,v'\}|$.
        \end{itemize}        
    
    \noindent In all the following cases, we suppose that $|A\cap V_2|\leq 1$ and $|B\cap V_2|\leq 1$.
    
        \item $G_1$ is type 1, 5, or 9 and 
        \begin{itemize}
            \item $|A|=|B|\leq k$; or 
            \item $|A|=|B|= k+1$ and the vertices $a\in A\cap V_2$ and  $b\in B\cap V_2$ are in the same connected component of $G_2$.
        \end{itemize}
        \item $G_1$ is type 3 and
        \begin{itemize}
            \item $A=B$;
            \item $\{v,v'\}\nsubseteq A$, $\{v,v'\}\nsubseteq B$ and $|A|=|B|\leq k$; or 
            \item  $|A|=|B|= k+1$ and the vertices $a\in A\cap V_2$ and  $b\in B\cap V_2$ are in the same connected component of $G_2$. 
            
            \underline{Note:} It is not necessary to add the condition $\{v,v'\}\nsubseteq A$, $\{v,v'\}\nsubseteq B$ because $A\cap V_2\neq \emptyset$ and $B\cap V_2\neq \emptyset$ implies $v, v' \notin A$ and $v, v' \notin B$
        \end{itemize}
        \item $G_1$ is type 7 and 
        \begin{itemize}
            \item $\{v,v'\}\subseteq A$, $\{v,v'\}\subseteq B$ and $|A|=|B|\leq k-1$; 
            \item $\{v,v'\}\subseteq A$, $\{v,v'\}\subseteq B$, $|A|=|B|= k$ and the vertices $a\in A\cap V_2$ and  $b\in B\cap V_2$ are in the same connected component of $G_2$;
            \item $\{v,v'\}\nsubseteq A$, $\{v,v'\}\nsubseteq B$ and $|A|=|B|\leq k$; or 
            \item $\{v,v'\}\nsubseteq A$, $\{v,v'\}\nsubseteq B$, $|A|=|B|= k+1$ and the vertices $a\in A\cap V_2$ and  $b\in B\cap V_2$ are in the same connected component of $G_2$.
        \end{itemize}
        \item $G_1$ is type 2 and $|A|=|B|\leq 2$ or $A=B$.
        \item $G_1$ is type 6 or 10, and \begin{itemize}
            \item $|A|=|B|\leq 2$; or
            \item $|A|=|B|$ and $A\setminus \{v,v'\}=B\setminus \{v,v'\}$, where $v,v'$ are the twins of $G$.
        \end{itemize} 
        \item $G_1$ is type 4 and 
        \begin{itemize}
            \item $A=B$; or
            \item $\{v,v'\}\nsubseteq A$, $\{v,v'\}\nsubseteq B$ and $|A|=|B|\leq 2$.
        \end{itemize}
        \item $G_1$ is type 8 and 
        \begin{itemize}
            \item $A\cap V_1=B\cap V_1$ and, if there are vertices $a\in A\cap V_2$ and $b\in B\cap V_2$, they are in the same connected component of $G_2$;
            \item $\{v,v'\}\subseteq A$, $\{v,v'\}\subseteq B$ and $|A|=|B|\leq 3$; or
            \item $\{v,v'\}\nsubseteq A$, $\{v,v'\}\nsubseteq B$ and $|A|=|B|\leq 2$.
        \end{itemize}
    \end{enumerate}
\end{proposition}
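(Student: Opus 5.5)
The proof rests on one structural fact about $G_1\veebar G_2$: every vertex of $V_2$ is adjacent to every vertex of the body $V_1^1$, and to no leg in $V_1^2$. So an independent set $I$ meeting $V_2$ avoids $V_1^1$ entirely, and has the form $(I\cap V_1^2)\cup(I\cap V_2)$. First we record the invariant $|I\cap V_2|$ along a reconfiguration sequence. A token can enter or leave $V_2$ only along an edge between $V_2$ and $V_1^1$. When it does, it sits on a body vertex $c$, and no other token can then lie in $V_2$. Hence, if $|A\cap V_2|\geq 2$, every set in the sequence has at least two tokens in $V_2$, because the count can only change by passing through the value $1$. In that regime no token ever occupies the body, so the legs' tokens are frozen, up to swaps between true twins in $V_1^2$, which are adjacent. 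The tokens in $V_2$ then move only inside $G_2$. This proves case (a) in both directions.

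Now assume $|A\cap V_2|,|B\cap V_2|\leq 1$, so along any sequence $|I\cap V_2|\leq 1$. A token sitting in $V_2$ behaves like a token on a ``virtual leg'' $s_0$ adjacent to all of $V_1^1$. More precisely, we compare with the graph $H$ obtained from $G_1$ by adding a new vertex $s_0$ complete to $V_1^1$, with the following refinement. While the token is on $s_0$ it may move inside $G_2$, but only within its connected component. Leaving through the body and re-entering lets it change component. That move is available exactly when some body vertex can receive a token, which happens when the remaining $|I|-1$ tokens can be arranged on legs not adjacent to a chosen body vertex.

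For urchin types (b)--(d), $H$ is essentially an urchin of size $k+1$, except that $s_0$ has all of $C$ as neighbours. We reuse the induction of Lemma~\ref{lemma:urchin} and the twin reductions of Lemma~\ref{lemma:TrueTwins} and Observation~\ref{obs:FalseTwins}. If $|A|\leq k$, some leg index is free. We move the $V_2$-token, if any, through the body vertex $c_i$ of a free index onto $s_i$, reducing to the quasi-urchin lemma. If $|A|=k+1$, every leg index is saturated (the maximum independent set in $G_1$ has size $k$, or $k+1$ for type 7 with both twins). Then the $V_2$ token can never leave, since that would require an empty body vertex with no token among its leg neighbours. The legs are still rearranged as in the urchin, but the component of the $V_2$-token is fixed. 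The type 7 thresholds shift by one because $\{v,v'\}\subseteq A$ consumes an extra token, which is why the thresholds are $k-1$ and $k$ there. Type 3 with $c,c'\in A$ is rigid, giving the $A=B$ clause.

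For starfish types (e)--(h), any body vertex has only one non-neighbour leg. So a token can enter the body from $V_2$ only when the total size is at most $2$, or $3$ in the type 8 twin case via $K_{1,k-1}$. Beyond that the configuration is frozen except for true-twin swaps. This matches Lemmas~\ref{lemma:starfish} and~\ref{lemma:quasi-starfish}, with $V_2$ adding one more movable leg; when the size is at most $2$, we route through $\{s_1,c_1\}$. The type 8 frozen clause allows the $V_2$ token to wander within its component.

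The main obstacle is the boundary size ($k+1$ for urchins, and the type 8 cases) where exiting $V_2$ must be ruled out precisely. There I would carry out a careful counting argument showing that no body vertex is ever free of adjacent tokens.
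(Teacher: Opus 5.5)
Your framework is the paper's and is sound: $|I\cap V_2|$ can only change between $0$ and $1$, which settles (a); below saturation, a free leg index lets you evacuate the $V_2$-token through its body vertex; at saturation things are rigid. The gap is at the boundary cases, where you reproduce the displayed conditions verbatim. Several of them are false, so the sentences you offer for them are steps that fail.

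First, type 7 with $\{v,v'\}\subseteq A$. The pair does consume an extra token, but that raises the thresholds rather than lowering them. Observation~\ref{obs:FalseTwins} reduces to $G-\{v,v',c_1\}$, an urchin of size $k-1$ combined with $G_2$ by $\veebar$. There (b) gives thresholds $k-1$ and $k$ for $|A\setminus\{v,v'\}|$, i.e.\ $k+1$ and $k+2$ for $|A|$. Your own free-index move shows this: for $k=3$, $\{v,v',s_2,a\}$ reaches $\{v,v',s_3,a\}$ via $a\to c_3\to s_3$ and then $s_2\to c_2\to a$. With thresholds $k-1,k$, the set $\{v,v',s_2,s_3\}$ would not even be reachable from itself.

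Second, in the saturated case $|A|=k+1$ of type 7 without both twins, ``the legs are still rearranged as in the urchin'' is false. The $V_2$-token can never leave, and while it sits in $V_2$ every body vertex is blocked, so $A\cap V_1$ is frozen. Thus $\{a,v,s_2,\dots,s_k\}$ and $\{a,v',s_2,\dots,s_k\}$ satisfy the last bullet of (d) but are not mutually reachable. The requirement $A\cap V_1=B\cap V_1$ is missing; the paper's own proof of the type 7 case does state it.

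Third, for the starfish types, ``beyond that the configuration is frozen except for true-twin swaps'' overlooks the $V_2$-token, which you let wander only in type 8. In every starfish type, if $|A|\ge 3$ and $a\in A\cap V_2$, the leg tokens are stuck, but $a$ slides freely inside its component of $G_2$, since no leg is adjacent to $V_2$. For a starfish and an edge $ab$ of $G_2$, $\{s_1,s_2,a\}$ reaches $\{s_1,s_2,b\}$ in one move, contradicting (e); the same happens in (f) and (g). The paper's proof of the starfish case derives exactly this extra clause ($|A|=|B|$, $A\cap V_1=B\cap V_1$, and $a,b$ in the same component of $G_2$), which then does not appear in (e)--(g).

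So no argument can establish the statement as displayed, including the ``careful counting argument'' you defer to. What your method actually proves is a corrected version:
\begin{itemize}
\item thresholds $k+1$ and $k+2$ in the twin case of (d);
\item the extra requirement $A\cap V_1=B\cap V_1$ in the last bullet of (d);
\item the added wandering clause in (e)--(g), with $A\cap V_1=B\cap V_1$ taken up to the true-twin swap in (f).
\end{itemize}
You should have flagged these discrepancies rather than forcing the argument to match them.
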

\begin{proof}
Suppose that $|A\cap V_2|\geq 2$, notice that this implies $A\cap V_1^1=\emptyset$. 
Let $I, I'$ be independent sets of $G$ such that $|I\cap V_2| \geq 2$ and $I'$ is obtained from $I$ by sliding a token from $u$ to $w$. If $u\in V_1$ and $w \in V_2$, since $u$ is adjacent to $w$, $u\in V_1^1$, but is impossible since $I\cap V_2\neq \emptyset$ and $I$ induces an independent set in $G$. Analogously, if $u \in V_2$ and $w\in V_1$, since $u$ is adjacent to $w$, $w\in V_1^1$, but this is impossible since $I'\cap V_2\neq \emptyset$ and $I'$ induces an independent set in $G$, $I'\cap V_1^1=\emptyset$. Therefore, $u, w$ are both in $V_1$ or both in $V_2$ and then, if $A \overset{G}{\leftrightsquigarrow} B$, $|A\cap V_2| = |B\cap V_2|$ and $B\cap V_1^1 =\emptyset$.
Moreover, if $V_1^2$ is an stable set, $A\cap V_1^2= B\cap V_1^2$ since no token can be sliced to $V_1^1$ and otherwise, if $v, v' \in V_1^2$ are true twins, $A\setminus \{v,v'\}\cap V_1^2 =  B\setminus\{v,v'\}\cap V_1^2$ and $|A\cap \{v,v'\}|=|B\cap \{v,v'\}|$.
The proof of (\ref{item:2verticesInV2}) is complete, suppose from now on that $|A\cap V_2|\leq 1$ and $|B\cap V_2|\leq 1$.

\begin{case}\label{case:urchin}
$G_1$ is type 1 (an urchin). 
\end{case}

If $|A|=k+1$, $A$ must have the $k$ vertices in $S=V_1^2$ and one vertex in $V_2$ and then, $A \overset{G}{\leftrightsquigarrow} B$ if and only if $|A|=|B|$ and  the vertices $a \in A\cap V_2$ and $b\in B\cap V_2$ are in the same connected component of $G_2$.

If $|A|=l\leq k$, let us prove that $A \overset{G}{\leftrightsquigarrow} I$ where $I=\{s_1, \dots, s_l\}$ and $l=|A|$.
If $A \subseteq V_1$, applying the same idea of the proof of Lemma~\ref{lemma:urchin}, $A \overset{G}{\leftrightsquigarrow} I$. 
If there is a vertex $a \in A\cap V_2$, there is at least a vertex $w \in C$ such that the only vertex in $S$ adjacent to $w$ is not in $A$, then, we can move the token in $A$ to $w$ and apply the same procedure. 
Therefore, if $|A|\leq k$, 
$A \overset{G}{\leftrightsquigarrow} B$ if and only if $|A|=|B|$.

Moreover, notice that the same argument holds even if $G_1=K_2$, with one vertex in $V_1^1$ and the other in $V_1^2$.


\begin{case} $G_1$ is type 5 or 9. 
\end{case} 
Since a true twin in $G_1$ is also a true twin in $G$, this case is equivalent to Case~\ref{case:urchin}, by Lemma~\ref{lemma:TrueTwins}.

\begin{case}
$G_1$ is type 3. 
\end{case}

Note that $v, v' \in V_1^1$ in the unique $p$-separation of $G_1$.

If $v, v'\in A$, by Observation~\ref{obs:FalseTwins}, $A \overset{G}{\leftrightsquigarrow} B$ if and only if $v,v' \in B$ and $A\setminus \{v,v'\} \overset{G'}{\leftrightsquigarrow} B\setminus \{v,v'\}$, where $G'=G[V(G)\setminus (\{v,v'\}\cup N_G(v))]$. Since $V(G')\subseteq V_1^2$ and $V_1^2$ is a stable set, $A\setminus \{v,v'\} \overset{G'}{\leftrightsquigarrow} B\setminus \{v,v'\}$ is equivalent to $A\setminus \{v,v'\} = B\setminus \{v,v'\}$. Then, if $v,v' \in A$, $A \overset{G}{\leftrightsquigarrow} B$ if and only if $A=B$.

If $v,v'$ are not simultaneously in $A$, the proof is analogous to the one in Case~\ref{case:urchin}.
If $|A|=k+1$, $A \overset{G}{\leftrightsquigarrow} B$ if and only if $|A|=|B|$, $A\cap V_1=B\cap V_1$, and the vertices $a \in A\cap V_2$ and $b\in B\cap V_2$ are in the same connected component of $G_2$. 
If $|A|\leq k$, $A \overset{G}{\leftrightsquigarrow} B$ if and only if $|A|=|B|$.

\begin{case}
$G_1$ is type 7. 
\end{case}
Note that $v, v' \in V_1^2$ in the unique $p$-separation of $G_1$.

If $v, v'\in A$, by Observation~\ref{obs:FalseTwins}, $A \overset{G}{\leftrightsquigarrow} B$ if and only if $v,v' \in B$ and $A\setminus \{v,v'\} \overset{G'}{\leftrightsquigarrow} B\setminus \{v,v'\}$, where $G'=G[V(G)\setminus (\{v,v'\}\cup N_G(v))]$. Since $G'$ can be decomposed 
$G'=G'_1\veebar G_2$ where $G'_1=G_1[V_1\setminus (N_{G_1}[v]\cup\{v'\})]$ is an urchin or $K_2$, we can decide as in Case~\ref{case:urchin}.

If $v,v'$ are not simultaneously in $A$, the proof is analogous to the one in Case~\ref{case:urchin}.
If $|A|=k+1$, $A \overset{G}{\leftrightsquigarrow} B$ if and only if $|A|=|B|$, $A\cap V_1=B\cap V_1$, and the vertices $a \in A\cap V_2$ and $b\in B\cap V_2$ are in the same connected component of $G_2$. 
If $|A|\leq k$, $A \overset{G}{\leftrightsquigarrow} B$ if and only if $|A|=|B|$.

\begin{case}\label{case:starfish}
$G_1$ is type 2 (a starfish).
\end{case}
If $|A|\geq 3$ and $A\subseteq V_1$, no token can be moved, as it was shown in Lemma~\ref{lemma:starfish}. Then, $A \overset{G}{\leftrightsquigarrow} B$ if and only if $A=B$.

If $|A|\geq 3$ and there is a vertex $a\in A\cap V_2$, the others must be in $V_1^2$. Then, $a$ can move along the connected component of $G_2$ it belongs to, but no token can move to $C=V_1^1$, since $A\cap V_1^2$ dominates $V_1^1$ and no token in $A \cap V_1^1$ can move also, since $V_1^1$ is a stable set and $A\cap V_2$ dominates $V_1^1$. Then, $A \overset{G}{\leftrightsquigarrow} B$ if and only if $|A|=|B|$, $A\cap V_1=B\cap V_1$, and the vertices $a \in A\cap V_2$ and $b\in B\cap V_2$ are in the same connected component of $G_2$.

If $|A|=2$, let us prove that $A \overset{G}{\leftrightsquigarrow} I$ where $I=\{s, c\}$ is an independent set in $G_1$ with $s \in S$ and $c\in C$.
If $A \subseteq V_1$, applying the same procedure as in Lemma~\ref{lemma:quasi-starfish}'s proof, $A \overset{G}{\leftrightsquigarrow} I$. 
If there is a vertex $a\in A \cap V_2$ and $a' \in A\cap V_1^2$, let $c' \in C$ be nonadjacent to $a'$, then we can move the token in $a'$ to $c'$ and apply the previous case. Therefore, if $|A|=2$, $A \overset{G}{\leftrightsquigarrow} B$ if and only if $|A|=|B|=2$.

If $|A|=1$, $A \overset{G}{\leftrightsquigarrow} B$ if and only if $|A|=|B|=1$, by Observation~\ref{obs:connectionAndSize1}.

\begin{case}
$G_1$ is type 6 or 10. 
\end{case}
Since a true twin in $G_1$ is also a true twin in $G$, this case is equivalent to Case~\ref{case:starfish}, by Lemma~\ref{lemma:TrueTwins}.

\begin{case}
$G_1$ is type 4. 
\end{case}

Note that $v, v' \in V_1^1$ in the unique $p$-separation of $G_1$.

If $v, v'\in A$, by Observation~\ref{obs:FalseTwins}, $A \overset{G}{\leftrightsquigarrow} B$ if and only if $v,v' \in B$ and $A\setminus \{v,v'\} \overset{G'}{\leftrightsquigarrow} B\setminus \{v,v'\}$, where $G'=G[V(G)\setminus (N_G[v]\cup \{v'\})]$. Since $V(G')\subseteq V_1^2$ and $V_1^2$ is a stable set, $A\setminus \{v,v'\} \overset{G'}{\leftrightsquigarrow} B\setminus \{v,v'\}$ is equivalent to $A\setminus \{v,v'\} = B\setminus \{v,v'\}$. Then, if $v,v' \in A$, $A \overset{G}{\leftrightsquigarrow} B$ if and only if $A=B$.

If $v,v'$ are not simultaneously in $A$, the proof is analogous to Case~\ref{case:starfish}.

\begin{case}
$G_1$ is type 8. 
\end{case}
Note that $v, v' \in V_1^2$ in the unique $p$-separation of $G_1$.

If $v, v'\in A$, by Observation~\ref{obs:FalseTwins}, $A \overset{G}{\leftrightsquigarrow} B$ if and only if $v,v' \in B$ and $A\setminus \{v,v'\} \overset{G'}{\leftrightsquigarrow} B\setminus \{v,v'\}$, where $G'=G[V(G)\setminus (\{v,v'\}\cup N_G(v))]$. Since $G'$ is isomorphic to $K_{1, k-1}$ where the universal vertex is also universal in $G$ and the other $k-1$ are pendant vertices, $A\setminus \{v,v'\} \overset{G'}{\leftrightsquigarrow} B\setminus \{v,v'\}$ is equivalent to $|A\setminus \{v,v'\}|=|B\setminus \{v,v'\}|\leq 1$, or $|A\setminus \{v,v'\}| = |B\setminus \{v,v'\}|$, $A\cap V_1=B\cap V_1$ and if there are vertices $a\in A\cap V_2$ and $b\in B\cap V_2$, they are in the same connected component of $G_2$. 
Then, if $v,v' \in A$, $A \overset{G}{\leftrightsquigarrow} B$ if and only if $v, v'\in B$, $|A|=|B|$, and ($|A|=|B|\leq 3$) or ($A\cap V_1=B\cap V_1$ and if there are vertices $a\in A\cap V_2$ and $b\in B\cap V_2$, they are in the same connected component of $G_2$). 

If $v, v'$ are not simultaneously in $A$, the proof is analogous to Case~\ref{case:starfish}.
\end{proof}

By Theorem \ref{th-p} and Proposition \ref{pro$p$-tidy}, and as a consequence of previous results in this section, we obtain the following:

\begin{theorem}
There is a polynomial-time algorithm deciding TS-ISR in $P_4$-tidy graphs.
\end{theorem}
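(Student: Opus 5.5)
We decide the instance by recursion over the decomposition tree of Theorem~\ref{th-p}. Given a $P_4$-tidy graph $G$ and independent sets $A,B$, we first reject if $|A|\neq|B|$ (Observation~\ref{obs:SameSize}). We then compute in polynomial time the unique decomposition of $G$ into $p$-components and weak vertices via $\cup$, $\vee$ and $\veebar$. By Proposition~\ref{pro$p$-tidy}, every $p$-component is $P_5$, $\overline{P_5}$, $C_5$, a quasi-starfish or a quasi-urchin. We can recognise its type (one of the ten types, or one of the three small graphs) in polynomial time from its $p$-partition and twin structure.

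The decision proceeds bottom-up. At a leaf we apply the corresponding lemma, each of which is an explicit test on $|A|$, $|B|$, set equality and membership of the twin pair. For a weak vertex the test is trivial; for a $5$-vertex component we use the first lemma of this section; for the other components we use Lemmas~\ref{lemma:urchin}, \ref{lemma:quasi-urchin}, \ref{lemma:starfish} and \ref{lemma:quasi-starfish}. At an internal node we use the characterisations recalled from~\cite{KMM12}:
\begin{itemize}
\item for $G_1\cup G_2$, we recurse on $(A\cap V_i,B\cap V_i)$ for $i=1,2$;
\item for $G_1\vee G_2$, we answer yes if $|A|=|B|=1$, and otherwise recurse on the side $G_i$ containing both $A$ and $B$, answering no if there is no such side.
\end{itemize}
For $G_1\veebar G_2$, where $G_1$ is a $p$-separable $p$-component, we apply the last proposition. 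Its conditions need only three kinds of information: cardinalities and equalities of intersections with $V_1$ and $V_1^2$; one recursive reachability query $A\cap V_2 \overset{G_2}{\leftrightsquigarrow} B\cap V_2$ in case~(a); and, when $|A\cap V_2|=|B\cap V_2|=1$, whether the two tokens lie in the same connected component of $G_2$. The last is computed by breadth-first search. Thus each node triggers at most one recursive query per child.

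For polynomiality, observe that along the recursion each query concerns one node of the decomposition tree together with a single pair of sets that are restrictions of $A$ and $B$. Hence the number of queries is at most the number of tree nodes, which is $O(n)$, and each node costs polynomial work.

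The main point needing care, and the one I expect to be the obstacle, is ensuring that the proposition covers every situation arising at a $\veebar$ node. In particular, the twin-reduction arguments (Observation~\ref{obs:FalseTwins} for types 7 and 8) produce an auxiliary graph $G'=G_1'\veebar G_2$ where $G_1'$ may degenerate to an urchin of smaller size, $K_2$ or $K_{1,k-1}$. These degenerate cases must still be handled by the explicit tests (the Case~1 remark that the argument holds for $K_2$ addresses this). I would check that each such reduction again yields a decidable explicit condition with at most one recursive call on $G_2$, so the recursion stays linear in the tree size and the theorem follows.
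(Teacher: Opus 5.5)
Your proposal is correct and is essentially the paper's own argument made explicit: a bottom-up pass over the Jamison--Olariu decomposition tree, using the component lemmas at the leaves, the union/join rules at $\cup$ and $\vee$ nodes, and the $\veebar$ proposition (at most one recursive query on $G_2$ plus a connectivity test), so only $O(n)$ polynomial-time queries arise. One caution for the $\veebar$ step: follow the case analysis in that proposition's proof rather than its displayed statement, which has slips. For example, in items (e)--(g) a single token in $V_2$ with $|A|\ge 3$ can still move within its component of $G_2$, so the correct test is $A\cap V_1=B\cap V_1$ (up to true twins) plus a same-component check, not (twin-)equality of $A$ and $B$. In (d), the thresholds $k-1$ and $k$ for $\{v,v'\}\subseteq A$ should be $k+1$ and $k+2$, and its last bullet also needs $A\cap V_1=B\cap V_1$. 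The corrected tests are of the same explicit kind, so your complexity count is unaffected.
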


\section{Token sliding on bounded $p$-components graphs and in $(q,q-4)$ graphs}
We start this section with two general results.


\begin{lemma}
\label{l1-p}
Let $G = G_1 \veebar G_2$ such that $G_1$ is $p$-separable with $p$-partition $(V^1_1 , V^2_1)$. Let $A$ and $B$ two independent sets in G. If $|A \cap V(G_2)| \geq 2$ then, $A \overset{G}{\leftrightsquigarrow} B$ if and only if $A \cap V(G_2)\overset{G_2}{\leftrightsquigarrow} B \cap V(G_2)$ and $A \cap V_1^2\overset{G[V_1^2]}{\leftrightsquigarrow} B \cap V_1^2$.
\end{lemma}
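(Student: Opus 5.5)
The plan is to show that once two tokens sit in $V_2=V(G_2)$, no token can ever cross between $V_1$ and $V_2$, and no token can ever occupy $V_1^1$. The reconfiguration then splits into two independent reconfigurations, one in $G_2$ and one in $G[V_1^2]$. This is essentially the argument used at the start of the proof of the previous Proposition, item (a), made general.

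\textbf{Step 1 (invariant).} Let $I$ be an independent set with $|I\cap V_2|\ge 2$. Since every vertex of $V_1^1$ is adjacent to every vertex of $V_2$, we get $I\cap V_1^1=\emptyset$. Now let $I'$ be obtained from $I$ by sliding a token from $u$ to $w$.
\begin{itemize}
\item If $u\in V_1$ and $w\in V_2$, then $u\in V_1^1$, because the only edges between $V_1$ and $V_2$ leave $V_1^1$. This contradicts $I\cap V_1^1=\emptyset$.
\item If $u\in V_2$ and $w\in V_1$, then $w\in V_1^1$. But $I'$ still contains at least one vertex of $V_2$, which is adjacent to $w$; this contradicts independence.
\item If $u,w\in V_1$, the slide cannot end in $V_1^1$, since $w$ would be adjacent to the at least two tokens in $V_2$. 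It cannot start in $V_1^1$ either, since $I\cap V_1^1=\emptyset$.
\end{itemize}
Hence every slide is either inside $G_2$ or inside $G[V_1^2]$. Moreover $|I'\cap V_2|=|I\cap V_2|\ge 2$, so the invariant persists. By induction along any sliding sequence from $A$, every intermediate set $I_i$ satisfies $I_i\cap V_1^1=\emptyset$ and $|I_i\cap V_2|=|A\cap V_2|$.

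\textbf{Step 2 (forward direction).} Suppose $A=I_1,\dots,I_l=B$ is a sliding sequence. Project it to $I_i\cap V_2$ and to $I_i\cap V_1^2$. By Step 1, each slide changes exactly one projection, and it does so along an edge of $G_2$ or of $G[V_1^2]$ respectively. Deleting repetitions gives sequences certifying $A\cap V_2\overset{G_2}{\leftrightsquigarrow} B\cap V_2$ and $A\cap V_1^2\overset{G[V_1^2]}{\leftrightsquigarrow} B\cap V_1^2$. Since $A\cap V_1^1=\emptyset$, these two projections cover all of $A$.

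\textbf{Step 3 (backward direction).} The hypothesis $A\cap V(G_2)\overset{G_2}{\leftrightsquigarrow} B\cap V(G_2)$ gives $|B\cap V_2|=|A\cap V_2|\ge 2$, so $B\cap V_1^1=\emptyset$ and $B=(B\cap V_2)\cup(B\cap V_1^2)$. We first perform the $G_2$-sequence while keeping $A\cap V_1^2$ fixed. The sets stay independent because there are no edges between $V_2$ and $V_1^2$. We then perform the $G[V_1^2]$-sequence while keeping $B\cap V_2$ fixed, which is again valid for the same reason. The only step needing care is Step 1, in particular checking that no slide can use $V_1^1$ as an intermediate stop; the rest is bookkeeping.
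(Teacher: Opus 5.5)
Your proof is correct and uses the same idea as the paper: with at least two tokens in $V(G_2)$, no token can ever occupy $V_1^1$, so no token can cross between $V(G_1)$ and $V(G_2)$, and the reconfiguration splits into independent moves in $G_2$ and in $G[V_1^2]$. Your version is more complete than the paper's sketch, since you state the invariant explicitly and prove both directions: projection for necessity, and concatenation of the two sequences for sufficiency, using $|B\cap V(G_2)|=|A\cap V(G_2)|\geq 2$ to get $B\cap V_1^1=\emptyset$.
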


\begin{proof}
Notice first that no vertex in $A \cap V(G_2)$ can slide to $G_1$. In fact, let $v_1,v_2$ to vertices in $A \cap V(G_2)$.  As all the vertices in $G_2$ are joined to all vertices in $V_1^1$ and to no vertex in $V_1^2$, the only way for $v_1$ or $v_2$ to slide to a vertex in $G_1$ is via a vertex in $V_1^1$ which is not possible. Moreover, notice that for a similar argument, $A \cap V_1^1 = \emptyset$ and no vertex in $A \cap V_1^2$ can slide to $G_2$ because the only way for one of these vertices to slide to $G_2$ is passing through a vertex in $V_1^1$ which is not possible. 
Finally, notice that $A \overset{G}{\leftrightsquigarrow} B$ implies $|A \cap V(G_1)| = |B \cap V(G_1)|$ and $|A \cap V(G_2)| = |B \cap V(G_2)|$, if $|A\cap V(G_2)|\geq 2$.
\end{proof}

\begin{lemma}
\label{l2-p}
Let $G = G_1 \veebar G_2$ such that $G_1$ is $p$-separable with $p$-partition $(V^1_1 , V^2_1)$. Let $A$ and $B$ two independent sets in G. If $|A \cap V(G_2)| \leq 1$ and $A \overset{G}{\leftrightsquigarrow} B$, $|B \cap V(G_2)| \leq 1$. Moreover, one of the following conditions holds:
\begin{enumerate}[(a)]
    \item \label{item:BothEmpty} If $A \cap V(G_2) = \emptyset$ and $B \cap V(G_2) = \emptyset$ then, $A \overset{G}{\leftrightsquigarrow} B$ if and only if $A \overset{G[V(G_1) \cup \{v\}]}{\leftrightsquigarrow} B$  where $v$ is any vertex of $G_2$. 

    \item \label{item:OneEmpty} If $A \cap V(G_2) \neq \emptyset$ and $B \cap V(G_2) = \emptyset$ then, $A \overset{G}{\leftrightsquigarrow} B$ if and only if $A \overset{G[V(G_1) \cup \{v\}]}{\leftrightsquigarrow} B$  where $v$ is the only vertex in $A\cap V(G_2)$.  
    
    If $A \cap V(G_2) = \emptyset$ and $B \cap V(G_2) \neq \emptyset$ is analogous.
    
    \item \label{item:BothNonEmptyInTheSameComponent} If $A \cap V(G_2) \neq \emptyset$, $B \cap V(G_2) \neq \emptyset$, and there is only one connected component $C$ in $G_2$ such that $A \subseteq V(G_1) \cup V(C)$ and $B \subseteq V(G_1) \cup V(C)$ then, $A \overset{G}{\leftrightsquigarrow} B$ if and only if $(A \setminus \{v\}) \cup \{w\} \overset{G[V(G_1) \cup \{w\}]}{\leftrightsquigarrow} B$, where $v\in A\cap V(G_2)$ and $w\in B\cap V(G_2)$.  

    \item \label{item:BothNonEmptyInDistinctComponents} If $A \cap V(G_2) \neq \emptyset$, $B \cap V(G_2) \neq \emptyset$, and there are two different connected components $C_1$ and $C_2$ of $G_2$ such that $v \in A \cap C_1$ and $w \in B \cap C_2$ then, $A \overset{G}{\leftrightsquigarrow} B$ if and only if $A \overset{G[V(G_1) \cup \{v,w\}]}{\leftrightsquigarrow} B$. 
\end{enumerate}
\end{lemma}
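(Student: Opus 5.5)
We prove the first claim by looking at the first move that takes the reconfiguration sequence from an independent set $I$ with $|I\cap V(G_2)|\le 1$ to one $I'$ with $|I'\cap V(G_2)|\ge 2$. A single slide changes $|I\cap V(G_2)|$ by at most one. So the move must go from some $u\in V_1$ to some $w\in V(G_2)$ while $I\cap V(G_2)=\{x\}$ is nonempty. Since $uw$ is an edge, $u\in V_1^1$. But every vertex of $V_1^1$ is adjacent to $x$, so $I$ was not independent, a contradiction. Hence every set along the sequence, in particular $B$, meets $V(G_2)$ in at most one vertex.

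The key structural observation for the four items concerns an intermediate set $I$ with $|I\cap V(G_2)|\le 1$. In such a set, the token sitting on $G_2$ (if any) can only interact with $G_1$ through $V_1^1$. Also, every vertex of $G_2$ has the same neighbourhood $V_1^1$ in $G_1$. We therefore project the sequence. Fix a target vertex $z$ (the $v$ or $w$ of the relevant item). Every move internal to $G_2$ is deleted. Every move $u\to y$ with $u\in V_1^1$ and $y\in V(G_2)$ is replaced by $u\to z$, and every move $y\to u$ is replaced by $z\to u$. The resulting sets are still independent, because $z$ and $y$ have the same neighbours in $V_1$ and the only other token is in $V_1$. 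After removing repetitions, this gives a sequence in $G[V(G_1)\cup\{z\}]$. Conversely, a sequence in $G[V(G_1)\cup\{z\}]$ is already a sequence in $G$, since it is an induced subgraph.

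For (a), apply the projection with $z=v$ an arbitrary vertex of $G_2$. Both endpoints avoid $G_2$, so they are unchanged.

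For (b), take $z=v$, the vertex of $A\cap V(G_2)$. Here the projection must fix the endpoint $A$. Moves inside $G_2$ are simply dropped, since the token is pinned to $z$ in the projection.

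For (c), in the forward direction, slide the token from $v$ to $w$ inside $C$. This is legal: the other tokens lie in $V_1^2$, because $V_1^1$ is dominated by $v$, so they are nonadjacent to all of $V(G_2)$. This shows $A\leftrightsquigarrow(A\setminus\{v\})\cup\{w\}$. Then project onto $z=w$. The reverse direction is composition of the two sequences.

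For (d), the projection must respect the component. A token entering $G_2$ at a vertex of one component and leaving from another is impossible without passing through $V_1^1$. During its stay in $G_2$ the token moves only inside one component, and while it is there the rest of $I$ avoids $V_1^1$. So in the projection we map each visit to $G_2$ to $v$ if it lies in $C_1$, to $w$ if it lies in $C_2$, and to either otherwise (say $v$). This gives a sequence in $G[V(G_1)\cup\{v,w\}]$. Since $v,w$ are nonadjacent with the same neighbourhood $V_1^1$, independence is preserved; at most one of them is ever occupied, so no conflict arises. The converse is again by inclusion.

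The main obstacle is writing the projection cleanly. It must keep fixed endpoints in $G_2$ correct, so the endpoint's token maps to itself. It must also drop intra-$G_2$ moves, which requires arguing that the rest of the set lies in $V_1^2$ whenever a token is in $G_2$.
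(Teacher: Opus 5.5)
Your proposal is correct and follows essentially the paper's argument. At most one token ever sits in $G_2$, and it can be replaced by a fixed representative, since all vertices of $G_2$ have neighbourhood $V_1^1$ in $G_1$; moves inside $G_2$ are discarded, (c) first slides the token from $v$ to $w$ inside $C$, and the converse is by induced-subgraph inclusion. The differences are cosmetic: you prove the first claim directly rather than via Lemma~\ref{l1-p}, and in (d) you choose the representative per visit by component instead of switching from $v$ to $w$ at an index $i_0$ with $I_{i_0}\subseteq V_1$.
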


\begin{proof}

By Lemma~\ref{l1-p}, independent sets with more than one token in $V(G_2)$ can only reach other independent sets having the same property. Hence, if $|A \cap V(G_2)| \leq 1$ and $A \overset{G}{\leftrightsquigarrow} B$ then, $|B \cap V(G_2)| \leq 1$.

(\ref{item:BothEmpty}) and (\ref{item:OneEmpty}) Since $G'=G[V(G_1)\cup \{v\}]$ is an induced subgraph of $G$, $A \overset{G'}{\leftrightsquigarrow} B$ implies $A \overset{G}{\leftrightsquigarrow} B$. 
Suppose now that $A \overset{G}{\leftrightsquigarrow} B$, and let $A=I_1, I_2, \dots, I_l=B$ a sequence of independent sets such that $I_{i+1}$ can be obtained from $I_i$ by sliding a token along an edge of $G$. Notice that $|I_i\cap V_2|\leq 1$ for every $i$, by Lemma~\ref{l1-p}. Let $I'_i=I_i$ if $I_i\cap V_2=\emptyset$ and $I'_1 = (I_1\cap V_1)\cup \{v\}$, otherwise. Then, $A=I'_1, I'_2, \dots, I'_l=B$ is a sequence such that $I'_{i+1}=I'_i$ or $I'_{i+1}$ can be obtained from $I'_i$ by sliding a token along an edge of $G'$. Thus, just by removing from the sequence the independent sets that are repeated, we get a sequence that certifies that $A \overset{G'}{\leftrightsquigarrow} B$.

(\ref{item:BothNonEmptyInTheSameComponent}) If $v$ and $w$ are in the same connected component $C$ of $G_2$, the token $v$ in $A$ can be slide to $w$ along a $vw$-path in $G_2$. Therefore, $A \overset{G}{\leftrightsquigarrow} B$ if and only if $(A\setminus \{v\})\cup \{w\} \overset{G}{\leftrightsquigarrow} B$. Using the same procedure as before, we get that $(A\setminus \{v\})\cup \{w\} \overset{G}{\leftrightsquigarrow} B$ if and only if $(A\setminus \{v\})\cup \{w\} \overset{G'}{\leftrightsquigarrow} B$, where $G'=G[V(G_1)\cup \{w\}]$. 


(\ref{item:BothNonEmptyInDistinctComponents})
Since $G'=G[V(G_1)\cup \{v,w\}]$ is an induced subgraph of $G$, $A \overset{G'}{\leftrightsquigarrow} B$ implies $A \overset{G}{\leftrightsquigarrow} B$. 
Suppose now that $A \overset{G}{\leftrightsquigarrow} B$, and let $A=I_1, I_2, \dots, I_l=B$ a sequence of independent sets such that $I_{i+1}$ can be obtained from $I_i$ by sliding a token along an edge of $G$. Since there is no way of moving the token in $v$ to $w$ without moving it to $V_1$, there is $i_0$ such that $I_{i_0}\subseteq V_1$. Let $I'_i=I_i$ if $I_i \subseteq V_1$, $I'_i=(I_i\cap V_1) \cup \{v\}$ if $I_i \nsubseteq V_1$ and $i<i_0$, and $I'_i=(I_i\cap V_1) \cup \{w\}$ if $I_i \nsubseteq V_1$ and $i>i_0$.
Then, $A=I'_1, I'_2, \dots, I'_l=B$ is a sequence such that $I'_{i+1}=I'_i$ or $I'_{i+1}$ can be obtained from $I'_i$ by sliding a token along an edge of $G'$. Thus, just by removing from the sequence the independent sets that are repeated, we get a sequence that certifies that $A \overset{G'}{\leftrightsquigarrow} B$.
\end{proof}

\begin{observation}
\label{ob-p}
Let $G$ be a graph on $n$ vertices and let $A$ and $B$ two independent sets in $G$ with $|A| = |B| = k$. There is a naive $O(kn^{2k})$-time algorithm to decide if $A$ can be reconfigured into $B$ under the Token Slide rule.
\end{observation}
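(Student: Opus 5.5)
The plan is to reduce the question to reachability in an explicit graph, the \emph{reconfiguration graph} $\mathcal{R}_k(G)$. Its vertices are the independent sets of $G$ of size exactly $k$. Two such sets $I,J$ are adjacent in $\mathcal{R}_k(G)$ if and only if $|I\setminus J|=|J\setminus I|=1$ and the two symmetric-difference vertices are adjacent in $G$; this is exactly the case when $J$ is obtained from $I$ by sliding one token. By Observation~\ref{obs:SameSize} every set reachable from $A$ has size $k$, so $A \overset{G}{\leftrightsquigarrow} B$ holds if and only if $A$ and $B$ lie in the same connected component of $\mathcal{R}_k(G)$. The algorithm is therefore a breadth-first search in $\mathcal{R}_k(G)$ starting at $A$, stopping when $B$ is found or the component is exhausted.

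First bound the number of vertices of $\mathcal{R}_k(G)$. There are at most $\binom{n}{k}\le n^k$ subsets of size $k$, and for each one independence can be tested in $O(k^2)$ time. We do not build the whole graph up front; we generate neighbours on the fly. For the current set $I$, we try every token $u\in I$ (at most $k$ choices) and every neighbour $w\in N_G(u)$ (at most $n$ choices). For each pair we check whether $w\notin I$ and whether $(I\setminus\{u\})\cup\{w\}$ is independent, which only requires checking $w$ against the other $k-1$ tokens. So each set has at most $kn$ candidate successors, and each is tested in $O(k)$ time.

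Next, count the total work. Store each set as a sorted $k$-tuple, so that membership in the visited structure costs $O(k)$ per lookup. A table indexed by $[n]^k$ works, or a hash or trie. Each of the at most $n^k$ sets is expanded once, and each expansion inspects $O(kn)$ candidates at $O(k)$ cost apiece. This gives $O(n^k\cdot kn\cdot k)$ overall. Since $k\le n$, this is comfortably within $O(kn^{2k})$ once $k\ge 2$; for $k\le 1$ the claim follows directly from Observation~\ref{obs:connectionAndSize1} by a connectivity check. Even a crude version works: if we naively list all pairs of size-$k$ sets and test adjacency between them, that takes $O(n^{2k})$ pairs at $O(k)$ per test, which gives exactly the stated $O(kn^{2k})$ bound and avoids the data-structure issue.

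The main obstacle, such as it is, is bookkeeping: making sure the visited-set lookups cost $O(k)$ rather than depending on the total number of stored sets. Using the explicit-pairs version sidesteps this entirely, so I would present that one as the proof and mention the BFS refinement only as a remark.
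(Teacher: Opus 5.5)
Your proposal is correct and is essentially the paper's argument: the paper likewise builds the graph $G_k$ on the independent $k$-sets, joins two sets when their symmetric difference is a pair of adjacent vertices, constructs it by testing all $O(n^{2k})$ pairs in $O(k)$ time each, and then decides reachability by an $AB$-path search in $O(n^{2k})$ time. Your on-the-fly BFS remark is also sound and gives the sharper bound $O(k^2n^{k+1})$, which the paper does not need.
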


In fact, we can enumerate all $k$-sets of vertices of $G$ in time $O(n^k)$ and test independence in $O(k^2)$ each one of them. After, we can construct a graph $G_k = (V_k,E_k)$ where $V_k$ is the set of independent $k$-sets in $G$ and where two independent $k$-sets $A_k$ and $B_k$ are adjacent if the size of their symmetric difference is exactly equal to two and if $a \in A_k \setminus B_k$ and $b \in B_K \setminus A_k$ are adjacent in $G$. The construction of $G_k$ can be done in $O(kn^{2k})$ time. Finally, test if there is a $AB$-path in $G_k$ can be done in $O(n^{2k})$.  

By Theorem \ref{th-p}, every graph can be obtained from its $p$-components and weak vertices by a finite sequence of operations $\cup$, $\vee$, and $\veebar$ in polynomial time. Moreover, By Lemmas \ref{l1-p} and \ref{l2-p}, and by Observation \ref{ob-p}, we can deduce the following result.
\begin{theorem}
\label{th2-p}
Let $G$ be a graph with bounded $p$-components. Then, there is a polynomial-time algorithm deciding TS-ISR in $G$.
\end{theorem}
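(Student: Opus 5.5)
We proceed by structural recursion along the decomposition tree given by Theorem~\ref{th-p}. Let $c$ be the bound on the size of the $p$-components. We compute this tree in polynomial time: its leaves are weak vertices and $p$-components, and its internal nodes are $\cup$, $\vee$ or $\veebar$ operations. The tree has $O(n)$ nodes. The algorithm is a recursive procedure $\mathrm{Reach}(H,A,B)$, where $H$ is the graph at some node and $A,B$ are independent sets of $H$. By Observation~\ref{obs:SameSize} it first rejects whenever $|A|\neq|B|$, and otherwise dispatches on the type of the node.

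\emph{Leaves.} A weak vertex is trivial. A $p$-component has at most $c$ vertices, so we apply Observation~\ref{ob-p}, which takes $O(c\,c^{2c})=O(1)$ time.

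\emph{Union and join nodes.} Here we use the characterization of~\cite{KMM12} recalled earlier. For $H=H_1\cup H_2$ we recurse on $(H_i,A\cap V(H_i),B\cap V(H_i))$ for $i=1,2$. For $H=H_1\vee H_2$ we answer yes if $|A|=|B|=1$ and $H$ is connected, which it always is. Otherwise $A$ and $B$ must both lie in the same $H_i$, and we recurse on that $H_i$.

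\emph{Nodes $H=H_1\veebar H_2$.} Here $H_1$ is a $p$-separable $p$-component, so $|V(H_1)|\le c$.
\begin{enumerate}[(i)]
\item If $|A\cap V(H_2)|\ge 2$, we apply Lemma~\ref{l1-p}. We recurse on $(H_2,A\cap V(H_2),B\cap V(H_2))$ and solve the instance on $H[V_1^2]$ by brute force, since it has at most $c$ vertices.
\item If $|A\cap V(H_2)|\le1$, Lemma~\ref{l2-p} first lets us reject when $|B\cap V(H_2)|\ge 2$.
\item In the remaining cases, Lemma~\ref{l2-p}(a)--(d) reduces the question to reachability in $H[V(H_1)\cup\{v\}]$ or $H[V(H_1)\cup\{v,w\}]$. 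Both graphs have at most $c+2$ vertices, so Observation~\ref{ob-p} settles them in constant time.
\item To choose among (a)--(d) we only need to know whether the vertices $v\in A\cap V(H_2)$ and $w\in B\cap V(H_2)$ lie in the same connected component of $H_2$. This is a BFS in $H_2$.
\end{enumerate}

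\emph{Running time.} Each recursive call either stops or descends into children along a single root-to-leaf route, except at $\cup$ nodes, where it branches into disjoint parts of the vertex set. Hence the total number of calls is $O(n)$, and each call costs polynomial time (a BFS, or a brute-force check on a graph of constant size). The overall algorithm is therefore polynomial.

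\emph{Main obstacle.} The delicate point is checking that Lemma~\ref{l2-p} really makes the $\veebar$ step independent of the internal structure of $H_2$ when $|A\cap V(H_2)|\le1$. Only the connectivity of $H_2$ between $v$ and $w$ matters, and a single representative vertex suffices. We would verify in particular that in case (c), first sliding the token from $v$ to $w$ inside $H_2$ is always legal: vertices of $H_2$ are anticomplete to $V_1^2$ and the token set avoids $V_1^1$. With this, the reduction to a constant-size instance is sound.
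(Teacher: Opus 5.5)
Your proposal is correct and follows the paper's route. You recurse over the decomposition of Theorem~\ref{th-p}, handle $\cup$ and $\vee$ nodes via the characterization of \cite{KMM12}, handle $\veebar$ nodes via Lemmas~\ref{l1-p} and~\ref{l2-p}, and settle the constant-size pieces with the brute force of Observation~\ref{ob-p}. Beyond the paper, you also spell out what it leaves implicit: the left operand of $\veebar$ is a bounded $p$-component, a BFS distinguishes cases (c) and (d), and there are $O(n)$ recursive calls.
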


By Theorems \ref{th-q} and \ref{th2-p}, and by Lemmas \ref{lemma:urchin} and \ref{lemma:starfish}, we have the following result:

\begin{theorem}
Let $q$ be a fixed parameter with $q\geq 4$ and let $G$ be a $(q,q-4)$-graph. Then, there is a polynomial-time algorithm deciding TS-ISR in $G$.
\end{theorem}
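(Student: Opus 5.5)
The plan is to combine the decomposition of Theorem~\ref{th-p} with the structure of $p$-connected $(q,q-4)$-graphs from Theorem~\ref{th-q}, and to reduce to the setting of Theorem~\ref{th2-p}. First compute, in polynomial time, the decomposition of $G$ into $p$-components and weak vertices via $\cup$, $\vee$ and $\veebar$. Every $p$-component $H$ of $G$ is an induced subgraph of $G$. Hence $H$ is itself a $p$-connected $(q,q-4)$-graph: the property ``no set of at most $q$ vertices induces more than $q-4$ distinct $P_4$'s'' is hereditary. By Theorem~\ref{th-q}, $H$ has fewer than $q$ vertices unless $q\in\{5,7\}$ and $H$ is an urchin or a starfish. (For $q=4$, $G$ is a cograph and there are no $p$-components at all; that case is handled by the $\cup$/$\vee$ rules of \cite{KMM12}.) So there are two kinds of components. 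Those of bounded size, fewer than $q$ vertices, are handled by the bounded $p$-component machinery. Those that are unbounded are urchins or starfishes, i.e.\ types 1 and 2.

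Next, I would run the recursive procedure along the decomposition tree. Leaves and the $\cup$, $\vee$ nodes are handled by the rules recalled before the Proposition. At a $\veebar$ node $G_1\veebar G_2$, Lemmas~\ref{l1-p} and~\ref{l2-p} reduce the question in one of two ways. One is to questions on $G_2$ together with a reconfiguration test in $G[V_1^2]$, which is a stable set, or a stable set with a pair of twins, and so is trivial. The other is to a reconfiguration question in $G[V(G_1)\cup\{v\}]$ or $G[V(G_1)\cup\{v,w\}]$ with one or two extra vertices, plus a connectivity test in $G_2$. When $|V(G_1)|<q$, this auxiliary graph has at most $q+1$ vertices. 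The brute force of Observation~\ref{ob-p} therefore decides it in $O(k(q+1)^{2k})$ time with $k\le q+1$, which is constant for fixed $q$. When $G_1$ is an urchin or a starfish, I would avoid brute force and use the explicit characterizations instead: items (b) and (e) of the Proposition for $\veebar$, and Lemmas~\ref{lemma:urchin} and~\ref{lemma:starfish} when the component stands alone. These are all checkable in linear time from cardinalities, set equality, and component membership in $G_2$.

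Finally, I would bound the total work. The decomposition tree has $O(n)$ nodes. At each node we restrict $A$ and $B$ and perform polynomially many tests, including connected components of $G_2$. Each bounded component costs a constant depending only on $q$. So the algorithm is polynomial for fixed $q$.

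I expect the main obstacle to be making sure the recursion at a $\veebar$ node correctly passes information upward. In particular, in Lemma~\ref{l2-p} the answer involves the graph $G_1$ augmented with vertices of $G_2$, not $G_1$ alone, and the $\vee$ rule requires knowing reachability in the subtrees only for specific pairs of restricted sets. I would handle this by noting that each query generated is again of the form ``is $A'\leftrightsquigarrow B'$ in an induced subgraph determined by the tree''. Such queries are answered top-down with at most one recursive call per child.
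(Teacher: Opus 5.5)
Your proposal is correct and follows the paper's argument: Theorem~\ref{th-q} leaves only $p$-components with fewer than $q$ vertices, decided by brute force through Lemmas~\ref{l1-p}, \ref{l2-p} and Observation~\ref{ob-p} as in Theorem~\ref{th2-p}, or urchins and starfishes, decided by the explicit characterizations, and you spell out the $\veebar$ step and the recursion more carefully than the paper's one-line proof does. Two cautions. First, apply item (e) of the Proposition on $G_1\veebar G_2$ only to the auxiliary graph $G[V(G_1)\cup\{v\}]$ or $G[V(G_1)\cup\{v,w\}]$ returned by Lemma~\ref{l2-p}, where the $G_2$-part is edgeless: as literally stated, (e) misses reachable pairs such as $A=\{s_1,s_2,a\}$, $B=\{s_1,s_2,b\}$ with $ab\in E(G_2)$, although the paper's proof of that case gives the right condition ($A\cap V_1=B\cap V_1$ and the two $V_2$-tokens in the same component of $G_2$). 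Second, $G[V_1^2]$ need not be a stable set, even up to one pair of twins, for the small components; it has fewer than $q$ vertices, though, so that slip is harmless.
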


\section{Discussion}
We have proved in this paper that the TS-ISR problem is polynomial time solvable for $P_4$-tidy graphs and for $(q,q-4)$-graphs, two subclasses of perfect graphs that generalizes cographs. We have also shown that the TS-ISR problem can be solved in polynomial time for graphs having bounded $p$-connected components. Bonsma \cite{Bonsma16} proved that under the TJ and TAR rules, the independent set reconfiguration problem on cographs is polynomial time solvable. It will be interesting to extend our results in order to show that under the TJ and TAR rules the independent sets reconfiguration remains polynomial time solvable in graphs with few $P_4$s.

\section{Acknowledgments}
This work was supported by the European Union through the MSCA-SE project QCOMICAL (Grant Agreement ID: 101182520) and by the IRP SINFIN (CNRS-CONICET-UBA, France–Argentine).




\begin{thebibliography}{99}
\bibitem{BO98}
L. Babel, S. Olariu. {\em On the structure of graphs with few $P_4$s}. Discrete Applied Mathematics 84:1–13, 1998.

\bibitem{BO99}
L. Babel, S. Olariu. {\em On the $p$-connectedness of graphs—a survey}. Discrete Applied Mathematics 95(1–3):11–33, 1999.

\bibitem{BKL21}
R. Belmonte, E. J. Kim, M. Lampis, V. Mitsou, Y. Otachi, F. Sikora. {\em Token slidong on split graphs}. Theory Comput Syst 65:662–686, 2021.

\bibitem{BoBo17}
M. Bonamy, N. Bousquet. {\em Token sliding on chordal graphs}. In proc. of WG 2017, volume 10520 of Lecture Notes in Computer Science, pages 127–139, 2017.

\bibitem{BBH19}
M. Bonamy, N. Bousquet, M. Heinrich, T. Ito, Y. Kobayashi, A. Mary, M. Mühlenthaler, K. Wasa. {\em The perfect matching reconfiguration problem}. In proc. of MFCS 2019, volume 138 of LIPIcs, pages 80:1–80:14, 2019.

\bibitem{Bonsma16}
P. S. Bonsma. {\em Independent set reconfiguration in cographs and their generalizations}. Journal
of Graph Theory, 83(2):164–195, 2016.

\bibitem{BKW14}
P. S. Bonsma, M. Kaminski, M. Wrochna. {\em Reconfiguring independent sets in claw-free graphs}. In proc. of SWAT 2014, volume 8503 of Lecture Notes in Computer Science, pages 86–97, 2014.

\bibitem{BJ21}
N. Bousquet, A. Joffard. {\em TS-reconfiguration of dominating sets in circle and circular-arc graphs}. In proc. of FCT 2021, volume 12867 of Lecture Notes in Computer Science, pages 114–134, 2021.

\bibitem{BMP17}
N. Bousquet, A. Mary, A. Parreau. {\em Token jumping in minor-closed classes}. In proc. of FCT 2017, volume 10472 of Lecture Notes in Computer Science, pages 136–149, 2017.

\bibitem{CHJ08}
L. Cereceda, J. van den Heuvel, M. Johnson. {\em Connectedness of the graph of vertex-colourings}. Discrete Mathematics, 308(5-6):913–919, 2008.

\bibitem{CLS81}
D. Corneil, H. Lerchs, L. Stewart Burlingham. {\em Complement reducible graphs}. Discrete Applied Mathematics 3(3):163–174, 1981.

\bibitem{DDF15}
E. D. Demaine, M. L. Demaine, E. Fox-Epstein, D. A. Hoang, T. Ito, H. Ono, Y. Otachi, R. Uehara, T. Yamada. {\em Linear-time algorithm for sliding tokens on trees}. Theor. Comput. Sci., 600:132–142, 2015.

\bibitem{FP25}
M. C. Francis, V. Prabhakaran. {Token sliding independent set reconfiguration on block graphs}. In proc. of FSTTCS 2025, Volume 360 of LIPIcs, pp. 31:1-31:19, 2025.

\bibitem{FHOU15}
E. Fox-Epstein, D. A. Hoang, Y. Otachi, R. Uehara. {\em Sliding token on bipartite permutation graphs}. In proc. of ISAAC 2015, volume 9472 of Lecture Notes in Computer Science, pages 237–247, 2015.

\bibitem{GRT97}
V. Giakoumakis, F. Roussel, H. Thuillier. {\em On $P_4$-tidy graphs}. Discrete Mathematics and Theoretical Computer Science 1:17–41, 1997.

\bibitem{HD05}
R. A. Hearn, E. D. Demaine. {\em PSPACE-completeness of sliding-block puzzles and other problems through the nondeterministic constraint logic model of computation}. Theoretical Computer Science, 343(1-2):72–96, 2005.

\bibitem{Heu13}
J. van den Heuvel. {\em The complexity of change}. Surveys in Combinatorics 2013, volume 409 of London Mathematical Society Lecture Note Series, pages 127–160, 2013.

\bibitem{Hoang85}
C. T. Hoàng. {\em Perfect graphs}. PhD thesis, School of Computer Science, McGill University, 1985.

\bibitem{IDHP11}
T. Ito, E. D. Demaine, N. J.A. Harvey, Ch. H. Papadimitriou, M. Sideri, R. Uehara, Y. Uno. {\em On the complexity of reconfiguration problems}. Theoretical Computer Science, 412(12):1054–1065, 2011.

\bibitem{IKO14}
T. Ito, M. Kaminski, H. Ono, A. Suzuki, R. Uehara, K. Yamanaka. {\em On the parameterized complexity for token jumping on graphs}. In proc. of TAMC 2014, volume 8402 of Lecture Notes in Computer Science, pages 341–351, 2014.

\bibitem{IKO14-2}
T. Ito, M. Kaminski, H. Ono. {\em Fixed-parameter tractability of token jumping on planar graphs}. In proc. of ISAAC 2014, volume 8889 of Lecture Notes in Computer Science, pages 208–219, 2014.

\bibitem{INZ16}
T. Ito, H. Nooka, X. Zhou. {\em Reconfiguration of vertex covers in a graph}. IEICE Transactions on Information and Systems, 99-D(3):598–606, 2016.

\bibitem{JO89-1}
B. Jamison, S. Olariu. {\em A new class of brittle graphs}, Studies in Applied Mathematics 81:89–92, 1989.

\bibitem{JO89-2}
B. Jamison, S. Olariu. {\em P4 -reducible graphs—a class of uniquely tree representable graphs}. Studies in Applied Mathematics 81:79–87, 1989.

\bibitem{JO91}
B. Jamison, S. Olariu. {\em On a unique tree representation for P4 -extendible graphs}. Discrete Applied Mathematics 34:151–164, 1991.

\bibitem{JO95}
B. Jamison, S. Olariu. {\em $p$-components and the homogeneous decomposition of graphs}. SIAM Journal on Discrete Mathematics 8:448–463, 1995.

\bibitem{KMM12}
M. Kamiński, P. Medvedev, M. Milanič. {\em Complexity of independent set reconfigurability problems}. Theor. Comput. Sci., 439:9–15, 2012.

\bibitem{LM18}
D. Lokshtanov, A. E. Mouawad. {\em The complexity of independent set reconfiguration on bipartite graphs}. ACM Transactions on Algorithms (TALG), 15(1):1–19, 2018.

\bibitem{MNR18}
A. E. Mouawad, N. Nishimura, V. Raman, S. Siebertz. {\em Vertex cover reconfiguration and beyond}. Algorithms, 11(2):20, 2018.

\bibitem{MNR17}
A. E. Mouawad, N. Nishimura, V. Raman, N. Simjour, A. Suzuki. {\em On the parameterized complexity of reconfiguration problems}. Algorithmica, 78(1):274–297, 2017.

\bibitem{Ni18}
N. Nishimura. {\em Introduction to reconfiguration}. Algorithms, 11(4):52, 2018. 

\bibitem{SS21}
N. Solomon, Sh. Solomon. {\em A generalized matching reconfiguration problem}. In proc. of ITCS 2021, volume 185 of LIPIcs, pages 57:1–57:20, 2021.

\bibitem{WR18}
M. Wrochna. {\em Reconfiguration in bounded bandwidth and tree-depth}. Journal of Computer and System Sciences, 93:1–10, 2018.

\end{thebibliography}
\end{document}